\newtheorem{theorem}{Theorem}
\newtheorem{lemma}{Lemma}
\newtheorem{assumption}{Assumption}
\theoremstyle{remark}
\newtheorem {remark}{Remark}
\title{A Game Approach to Multi-dimensional Opinion Dynamics in Social Networks with Stubborn Strategist Agents} 
\author[1]{Hossein B. Jond}
\affil[1]{Department of Cybernetics, Faculty of Electrical Engineering, Czech Technical University in Prague, Czech Republic}
\author[2]{Aykut Y{\i}ld{\i}z}
\affil[2]{Department of Electrical and Electronics Engineering, TED University, Ankara, 06420, Türkiye}
\begin{document}

\maketitle

\begin{abstract}
In a social network, individuals express their opinions on several interdependent topics, and therefore the evolution of their opinions on these topics is also mutually dependent. In this work, we propose a differential game model for the multi-dimensional opinion formation of a social network whose population of agents interacts according to a communication graph. Each individual's opinion evolves according to an aggregation of disagreements between the agent's opinions and its graph neighbors on multiple interdependent topics exposed to an unknown extraneous disturbance. For a social network with strategist agents, the opinions evolve over time with respect to the minimization of a quadratic cost function that solely represents each individual's motives against the disturbance. We find the unique Nash/worst-case equilibrium solution for the proposed differential game model of coupled multi-dimensional opinions under an open-loop information structure. Moreover, we propose a distributed implementation of the Nash/worst-case equilibrium solution. We examine the non-distributed and proposed distributed open-loop Nash/worst-case strategies on a small social network with strategist agents in a two-dimensional opinion space. Then we compare the evolved opinions based on the Nash/worst-case strategy with the opinions corresponding to social optimality actions for non-strategist agents. 
 
\textbf{Keywords:} Game theory; Multi-dimensional opinion dynamics; Nash/worst-case equilibrium; Social networks
\end{abstract}
\maketitle

\section{Introduction}\label{sec1}
Opinion dynamics in social networks has attracted a lot of attention in recent years from social sciences and control theory~\cite{FriedkinSociology,Acemoglu895377}. The extensive attention is due to the widespread applications in business, finance, marketing, e-commerce, politics, group decision-making, pandemic spread \cite{carballosa2021incorporating,you2022public}, and so on. See \cite{DONG201857,zha2020opinion} for surveys of applications of opinion dynamics. Researchers are looking for ideal mathematical models of interaction dynamics in a network of relationships to answer the critical question of how opinions form. Several models in this regard have been proposed to explain certain behaviors in the formation of opinions. The well-known ones are the French-DeGroot (FD) model \cite{degroot1974reaching}, the Friedkin-Johnsen (FJ) model \cite{friedkin1990social}, the Hegselmann-Krause (HK) model \cite{hegselmann2002opinion}, and Deffuant-Weisbuch (DW) model \cite{DeffuantWeisbuch}. The FD model of opinion dynamics addresses the emergence of consensus in a connected network topology. The FJ model includes stubborn agents and results in opinion diversity rather than consensus. In the HK model, the individuals are influenced by their nearest neighbors, so opinion clustering emerges rather than polarization or consensus. In the DW model, two randomly selected individuals take part in the opinion exchange business according to a threshold, resulting in consensus for a large selection of the threshold or clustering otherwise.

The network of interactions is usually defined as a graph to capture the patterns of interactions. In this context, the set of nodes represents the individuals, and each edge denotes a direct interaction, e.g., opinion transmission, between two individuals. On a communication graph with nonnegative weights assigned to its edges, the collaborative effort of all agents causes the network to reach an opinion consensus~\cite{CoDIT2023}. Networks with antagonistic agent interactions modeled by negative edge weights representing anticooperative behavior have also been studied~\cite{altafini,Su-Signed}. Multidimensional extensions of classical opinion dynamics models for multiple interdependent topics have recently been a topic of interest. The discrete-time multidimensional FJ and DG extensions are analyzed in~\cite{Parsegov} and their continuous-time counterparts with stubborn agents are presented in~\cite{YE2020108884}. A discrete-time multidimensional FJ model where each stubborn agent has a time-evolving stubbornness level is studied in~\cite{ZHOU2022170}.  

Previously, the classical opinion dynamics models considered social network agents as non-strategist entities, where each agent simply updated its opinions according to attractive interactions with its neighbors. However, when the agents of a social network become strategists, they become decision-makers who are selfish and prioritize their own interests in achieving the network's opinion consensus. Strategist agents with self-interests show noncooperative behavior, and thus opinion dynamics become a noncooperative multi-agent decision-making problem. In this problem, the decisions that other strategists are making at the same time impact each agent's opinion. For dynamical agents under the game theory framework, each agent, as a player of the game, determines her strategies in accordance with an individual optimization problem. Opinion dynamics in social networks has been studied using game theory in the literature for the last two decades~\cite{di2007opinion,GHADERI20143209,Abram39720,bauso2016opinion,bauso2018consensus}. An opinion formation game model for competition and bargaining among social network agents was studied in~\cite{Kareeva}. For selfish agents who take actions independently, the Nash equilibrium applies, and for agents who coordinate their behavior by negotiating their influence efforts, the Nash bargaining solution applies. Using a game setting, the cost of disagreement at best-response dynamics equilibrium was analyzed in~\cite{BINDEL2015248} by comparing the cost at equilibrium with the social optimum. The ratio of the worst equilibrium’s social cost to the optimal social cost as a measure of the price of anarchy in networks with directed graphs was studied for its boundedness in~\cite{CHEN2016808}. Public opinion evolution in the presence of conformity and manipulation behaviors via a game-theoretic approach was investigated in~\cite{Etesami2019}. Fast convergence with limited information exchange as a natural behavior for selfish agents under a game-theoretic model for social networks was studied in~\cite{Fotakis}. A game framework is used for optimal investment strategies for two competing camps, where the strategy of each camp comprises how much to invest in each node in a social network in~\cite{Dhamal2018,Dhamal2020,Dhamal2022}. The problem of competitive information spreading in the social network, e.g., see~\cite{Watkins}, as a zero-sum game that admits a unique Nash equilibrium in pure strategies is analyzed in~\cite{Mao2018,Mao2021}. Cost function learning, as opposed to the common assumption that the game players know each other’s cost function for decision-making, is analyzed for memorized social networks in~\cite{Mao2022}.

Recently, the differential game and its extension for large networks of agents, i.e., the mean-field game \cite{banez2022modeling}, were proven to provide models that are examined in a rigorous manner~\cite{Jiang212121,niazi2021differential,yildiz2021opinion,yildiz2022social,Paramahansa}. Differential games are one of the decentralized extensions of optimal control theory. Unlike in optimal control, there are multiple cost functions in a different game to minimize simultaneously \cite{bacsar1998dynamic}. Each agent tries to minimize her cost function, which consists of the deviation of her opinion from others (i.e., her graph neighbors) and her initial opinion as behavior characteristics expected from selfish agents.

In this paper, we propose a differential game model for multi-dimensional opinion formation in a social network on a communication graph. The proposed model is a multi-dimensional extension to differential game models in~\cite{niazi2021differential,yildiz2021opinion,yildiz2022social} and additionally, an unknown disturbance is included in the formulation. Furthermore, the proposed model is a differential game contribution to the current literature in multi-dimensional coupled opinion dynamics studies~\cite{Parsegov,Mengbin,Friedkin2,Noorazar,Ye}. The agents communicate on several unrelated or interdependent topics on an information graph. The game relies on the assumption that the players know each other’s cost functions for decision-making. Under the noncooperative mode of play, we derive an explicit open-loop solution for the Nash/worst-case equilibrium and its associated opinion formation trajectory. Additionally, we propose a distributed implementation of the solution based on the information available to each agent. The proposed game provides an analytical insight into the opinion dynamics of social networks via Nash equilibrium. Consequently, the main contributions of this work are summarized as follows: First, a differential model for multi-dimensional opinion dynamics on directed graphs is established. The cost functions include a stubbornness term for each agent. Second, the game strategies for each agent were found in an explicit expression derived by applying the necessary conditions for optimality using Pontryagin’s principle. Third, we propose a distributed implementation of game strategies that each agent can execute only using information from her graph neighbors.

The rest of the manuscript is organized as follows. In Section \ref{section:preliminaries}, some basic concepts in graph theory and the linear quadratic differential game are presented. In Section \ref{section:game}, the game of opinion formation is introduced. In Section \ref{section:open-loop-Nash}, the explicit solution of the open-loop Nash/worst-case equilibrium is derived, and in Section \ref{section:distributed}, its distributed implementation is discussed. Opinion formation as social optimality is considered in Section \ref{section:optimal}. In Section \ref{section:simulation}, numerical simulation results are demonstrated on a small social network with five agents. The state transition matrix is found in explicit form by the spectral decomposition for the commutative linear time-varying systems discussed in~\ref{app:time-varying}. Finally, Section \ref{section:conclusions} is dedicated to the conclusions and future works. 

\section{Preliminaries and Notation}\label{section:preliminaries}
The differential game framework presented in this paper for opinion formation in networks on graphs relies on some basic concepts in graph theory and the linear quadratic differential game that are discussed in the following. 
\subsection{Graph theory}
A directed graph is a pair $\mathcal{G}(\mathcal{V},\mathcal{E})$ where $\mathcal{V}$ is a finite set of vertices or nodes and $\mathcal{E}\subseteq \{(i,j):i,j \in \mathcal{V}\}$ is a set of edges or arcs. Each edge $(i,j)\in \mathcal{E}$ represents an information flow from node $i$ to node $j$ and is assigned a positive weight $\omega_{ij}>0$. The pair $(i,i)$ represents a self-loop. The set of neighbors of vertex $i$ is defined by $\mathcal{N}_i=\{j \in \mathcal{V}:(i,j)\mbox{ or }(j,i)\in \mathcal{E},j\neq i\}$. Graph $ \mathcal{G}$ is connected if for every pair of vertices $(i,j)\in \mathcal{V}\times\mathcal{V}$, from $ i$ to $ j$ for all $ j  \in  \mathcal{V}, j \neq  i $, there exists a path of (undirected) edges from $\mathcal{E}$. Matrix $D\in \mathbb{R}^{\mathcal{\lvert V \rvert }\times\mathcal{\lvert E \rvert }}$ is the incidence matrix of $\mathcal{G}$ where $D$'s $uv$th element is 1 if the node $u$ is the head of the edge $v$, $-1$ if the node $u$ is the tail, and $0$, otherwise.

An undirected graph is a pair $\mathcal{G}(\mathcal{V},\mathcal{E})$ where $\mathcal{V}$ is a finite set of vertices and $\mathcal{E}\subseteq \{\{i,j\}:i,j \in \mathcal{V}\}$ is a set of edges. Each edge $\{i,j\}$ represents bidirectional information flow between node $i$ and node $j$. The set of neighbors of vertex $i$ is defined by $\mathcal{N}_i=\{j \in \mathcal{V}:\{i,j\}\in \mathcal{E},j\neq i\}$. For an undirected graph, when constructing $D$, the direction of the edges is arbitrary.

The graph Laplacian matrix is defined as
\begin{equation}\label{eq:Laplacian}
    L=DWD^\top,
\end{equation}
where $W=\mathrm{diag}(\cdots,\omega_{ij},\cdots) \in \mathit{\mathbb{R}}^{\mathcal{\lvert E \rvert}\times \mathcal{\lvert E \rvert}}$ $\forall (i,j) \in \mathcal{E}$. In other words, the $ij$th entry of the Laplacian matrix is -$\omega_{ij}$, and its $i$th diagonal entry is $\sum_{j\in \mathcal{N}_i} \omega_{ij}$. The Laplacian $L$ is symmetric $(L=L^\top)$, positive semidefinite (PSD) $(L\geq 0)$, and satisfies the sum-of-squares property \cite{Olfati-Saber}
\begin{equation} \label{eq:sum-of-squares}
\sum_{\{i,j\}\in \mathcal{E}} \omega_{ij} \| y_i-y_j \|^2 =y^\top Ly,   
\end{equation}
where $ y=[y_1,\cdots,y_n]^\top$ is a nonzero vector and $\| .\|$ is the Euclidean norm. The nonzero eigenvalues of Laplacian $L$ have nonnegative real parts. A more compact notation for the matrix expression on the right hand of (\ref{eq:sum-of-squares}) is $\Vert y\Vert_{L}^2$. For any square matrix $A$, the compact form $\Vert y \Vert_{A}^2$  means $\Vert y \Vert_{A}^2=y^\top A y$.

The Kronecker product extends the dimension of a matrix \cite{Laub}. The extended dimensional incidence matrix $\mathcal{D}$ is defined as 
\begin{equation*}
\mathcal{D}=D_{n\times \mathcal{\lvert E \rvert}} \otimes I_{m\times m}=
\begin{bmatrix}
d_{11}I_{m\times m} & \cdots & d_{1\mathcal{\lvert E \rvert}}I_{m\times m}\\
\vdots & \ddots & \vdots\\
d_{n1}I_{m\times m} & \cdots & d_{n\mathcal{\lvert E \rvert}}I_{m\times m}
\end{bmatrix}
\in \mathit{\mathbb{R}}^{nm \times \mathcal{\lvert E \rvert}m},         
\end{equation*}
where $D=[d_{ij}]$, $I_{m\times m}$ is the identity matrix of dimension $m$, and $\otimes$ demonstrates the Kronecker product operator. According to the Kronecker product properties, $(D\otimes I_{m\times m})^\top=(D^\top\otimes I_{m\times m})=\mathcal{D}^\top$. 

The Laplacian matrix definition in (\ref{eq:Laplacian}) can be revisited as
\begin{equation*}
    \mathcal{L}=(D\otimes I_{m\times m})\mathcal{W}(D\otimes I_{m\times m})^\top=\mathcal{D}\mathcal{W}\mathcal{D}^\top,
\end{equation*}
where $\mathcal{L}$ is the extended dimensional Laplacian matrix, $\mathcal{W}=\mathrm{diag}(\cdots,\omega_{ij},\\\cdots)~\forall \{i,j\} \in \mathcal{E}$ is a block diagonal weighting matrix and $\omega_{ij}\in \mathit{\mathbb{R}}^{m\times m}$. 

\subsection{Linear Quadratic Nash/Worst-Case Equilibria} \label{sec:Nash}

Consider an $N$-player noncooperative linear-quadratic differential game with an unknown disturbance described by the following linear differential equation~\cite{Engwerda2017}
\begin{equation}\label{eq:dynamics-dist}
    \dot{x}(t)=A(t)x(t)+\sum_{i=1}^NB_i(t)u_i(t)+\sum_{i=1}^NB_{\varpi_i}(t)\varpi_i(t), \quad x(0)=x_0
\end{equation}
where $A(t)\in\mathit{\mathbb{R}}^{n\times n}$, $B_i(t)\in\mathit{\mathbb{R}}^{n\times m_i}$, $B_{\varpi_i}(t)\in\mathit{\mathbb{R}}^{n\times k_i}$, $x(t)\in\mathit{\mathbb{R}}^{n\times 1}$ is the state vector, $u_i(t)\in\mathit{\mathbb{R}}^{m_i\times 1}$ denotes the input control vector for player $i$, and $\varpi_i(t)\in\mathit{\mathbb{R}}^{k_i\times 1}$ is an extraneous input representing the unknown disturbance. We assume that this unknown disturbance is finite in the sense that $\int_0^\infty \Vert\varpi_i(t)\Vert^2~\mathrm{dt}$ exists as a finite number (i.e. $\varpi_i(t)$ is square integrable or $\varpi_i(t)\in L_2^\dagger$. 

Each player has a quadratic cost function
\begin{equation}\label{eq:cost-dist}
   J_i=\frac{1}{2}x^\top(t_f)Q_ix(t_f)+\frac{1}{2} \int_{0}^{t_f} \big(u_i^\top(t)R_iu_i(t)-\varpi_i^\top(t)R_{\varpi_i}\varpi_i(t)\big)~\mathrm{dt}, 
\end{equation}
where $Q_i\in\mathit{\mathbb{R}}^{n\times n}$, $R_i\in\mathit{\mathbb{R}}^{m_i\times m_i}$ and $R_{\varpi_i}\in\mathit{\mathbb{R}}^{k_i\times k_i}$ are positive definite for $i=1,\cdots,N$.

Due to the uncooperativeness assumption, a so-called set of Nash equilibrium actions is expected to be played. By adopting a Nash/worst-case equilibrium, every player has no incentive to change her policy given her worst-case expectations of the disturbance and the actions of other players. Under open-loop information, a player’s policy will depend on information only from the beginning of the game.

Let $\hat{E}_{Nn}$ denote the block-column matrix of $N$ blocks of the identity matrix of dimension $n$ and $\bar{E}_{i,j}$ be the block-column matrix of $i$ blocks of zero matrices of dimension $n$ except for block $j$ which is an identity matrix. $\mathrm{diag}(A)_N$ also denotes the $N\times N$ block diagonal matrix with diagonal blocks $A$.    

As mentioned in \textit{Corollary 3.2} in~\cite{Engwerda2017}, the linear-quadratic differential game (\ref{eq:dynamics-dist}) and (\ref{eq:cost-dist}) is rewritten as the $2n$-player differential game
\begin{equation}\label{eq:dynamics-dist-2n}
    \dot{\bar{x}}(t)=\bar{A}_N(t)\bar{x}(t)+\sum_{i=1}^N\bar{B}_i(t)u_i(t)+\sum_{i=1}^N\bar{B}_{\varpi_i}(t)\varpi_i(t), \quad \bar{x}(0)=\bar{x}_0,
\end{equation}
where player $i$ tries to minimize with respect to $u_i(t)$ and player $N+i$ tries to maximize with respect to $\varpi_i(t)$ the cost function
\begin{equation}\label{eq:cost-dist-2n}
   \bar{J}_i=\frac{1}{2}\bar{x}^\top(t_f)\bar{Q}_i\bar{x}(t_f)+\frac{1}{2} \int_{0}^{t_f} \big(u_i^\top(t)R_iu_i(t)-\varpi_i^\top(t)R_{\varpi_i}\varpi_i(t)\big)~\mathrm{dt}, 
\end{equation}
where $\bar{x}(t)=\hat{E}_{Nn}x(t)$, $\bar{B}_i(t)=\hat{E}_{Nn}{B}_i(t)$, $\bar{B}_{\varpi_i}(t)=\bar{E}_{N,i}{B}_{\varpi_i}(t)$, and $\bar{Q}_i=\bar{E}_{N,i}Q_i\bar{E}_{N,i}^\top$.

Nash/worst-case equilibria for the linear-quadratic differential game (\ref{eq:dynamics-dist-2n}) and (\ref{eq:cost-dist-2n}) are derived using Pontryagin’s principle in terms of a nonsymmetric Riccati equation. 

Let $G=\mathrm{diag}(R_1,\cdots,R_n)$, which is invertible since $R_i$ ($i=1,\cdots,N$) are, $B(t)=[\bar{B}_1(t),\cdots,\bar{B}_n(t)]^\top$, $\Tilde{B}^\top(t)=\mathrm{diag}(B_1^\top(t),\cdots,B_n^\top(t))$, $\Tilde{S}(t)=BG^{-1}\Tilde{B}^\top(t)-\mathrm{diag}(S_{\varpi_i})$, $\Tilde{M}(t)=\begin{bmatrix}
    \bar{A}_N(t) & -\Tilde{S}(t) \\ -Q & -\bar{A}_N^\top(t)
\end{bmatrix}$, $\Tilde{S}(t)=\hat{E}_{Nn}[S_1(t),\cdots,\\S_N(t)]-\mathrm{diag}(S_{\varpi_i})$, $Q=\mathrm{diag}(Q_i)$, $S_j(t)=B_j^\top(t)R_j^{-1}B_j(t)$, and $S_{\varpi_i}(t)=B_{\varpi_i}^\top(t)R_{\varpi_i}^{-1}B_{\varpi_i}(t)$.

In~\cite{Engwerda2017}, it is shown that if the nonsymmetric Riccati differential equation
\begin{align}\label{eq:Riccati-2n}
&\dot{P}(t)+\bar{A}_N^\top(t)P(t)+P(t)\bar{A}_N(t)-P(t)\Tilde{S}(t)P(t)=0,\quad P(t_f)=Q,
\end{align}
has a solution on $[0,t_f]$, the linear-quadratic differential game (\ref{eq:dynamics-dist}) and (\ref{eq:cost-dist}) has a unique Nash/worst-case equilibrium for every initial state $x_0$.
The worst-case Nash equilibrium actions and corresponding worst-case disturbances are given by the following
\begin{align*}
    u(t)=-G^{-1}\Tilde{B}^\top(t)P(t)\bar{x}(t), \\
    \varpi_i(t)=R_{\varpi_i}^{-1}B_{\varpi_i}^\top(t)P_i(t)\bar{x}(t),
\end{align*}
where $u(t)=[u_1^\top(t),\cdots,u_N^\top(t)]^\top$, $P^\top(t)=[P_1^\top(t),\cdots,P_N^\top(t)]$, $P_i(t)\in\mathit{\mathbb{R}}^{n\times nN}$. Moreover, the state trajectory satisfies
\begin{equation*}
    \dot{\bar{x}}(t)=A_{cl}(t)\bar{x}(t), \quad \bar{x}(0)=\bar{x}_0,
\end{equation*}
where
\begin{equation*}
   A_{cl}(t)=\bar{A}_N(t)-BG^{-1}\Tilde{B}^\top(t)P(t)\bar{x}(t)+\mathrm{diag}(S_{\varpi_i})P(t).
\end{equation*}

\section{Game of Opinion Formation}\label{section:game}

Consider a social network of $n$ (heterogeneous) agents indexed by $1$ through $n$ on a communication graph $\mathcal{G}(\mathcal{V},\mathcal{E})$ communicating about $m$ interdependent topics ($m\in \mathit{\mathbb{N}}>1$). The set of vertices $\mathcal{V}=\{1,\cdots,n\}$ corresponds to the set of agents. Each edge $(i,j)\in \mathcal{E}$ represents an opinion flow from node $i$ to node $j$ and is assigned an interpersonal influences matrix $W_{ij}\in \mathit{\mathbb{R}}^{m\times m}$ with nonnegative entries. Without a loss of generality, assume there are two interdependent topics (i.e., $m=2$), namely, topic $p$ and topic $q$. The off-diagonal entry $\omega_{ij}^{pq}>0$ in $W_{ij}=\begin{bmatrix}\omega_{ij}^{pp} & \omega_{ij}^{pq} \\ \omega_{ij}^{qp} & \omega_{ij}^{qq}\end{bmatrix}$ imposes a coupling between topic $p$ of agent $i$ and topic $q$ of agent $j$. Similarly, $\omega_{ij}^{qp}>0$ relates to a coupling between topic $q$ of agent $i$ and topic $p$ of agent $j$. Their magnitude corresponds to the level of trust accorded one agent in the opinion of the other agent. $W_{ii}=\begin{bmatrix}\omega_{ii}^{pp} & 0 \\ 0 & \omega_{ii}^{qq}\end{bmatrix}$ is called the stubbornness matrix for agent $i$, where its entries are the stubbornness coefficients. If agent $i$ is not stubborn about a topic, the corresponding diagonal entry for that topic becomes zero. Similarly, the off-diagonal entries for topics that are not evolving in coupling by the opinion dynamics model are also set to zero.

\begin{assumption} \label{assumption:connectivity}
 The opinion network graph $\mathcal{G}$ for every topic is connected.
\end{assumption}

The connectivity of $\mathcal{G}$ means there exists at least one globally reachable node for every topic (a root node of a spanning tree on the graph). In an opinion network on a connected communication graph, each agent for every topic has at least one neighbor with whom they interact. Two agents $i$ and $j$ are said to be graph neighbors if a topic in agent $i$ is coupled to another topic or the same topic in another agent $j$ through a nonzero matrix $W_{ij}$.

Let $x_i(t)\in \mathit{\mathbb{R}}^{m\times 1}$ denote the opinion vector of agent $i\in\mathcal{V}$ at time $t>0$ which evolves in continuous time according to the following dynamics~\cite{Ahn}  
\begin{equation} \label{eq:dynamics00}
 \dot{x}_i(t)=\sum_{j\in\mathcal{N}_i}W_{ij}(x_j(t)-x_i(t)).   
\end{equation}
The meaning of this model is that each agent’s opinion on multiple interdependent topics evolves according to the sum of disagreements between the opinions of each agent and its graph neighbors on those topics.

Traditionally, an agent in a social graph is modeled as a node that simply aggregates the opinions of other agents to update its own opinion according to a specified rule without strategically influencing other agents' opinions. In this study, we assume that every agent in the social network graph $\mathcal{G}(\mathcal{V},\mathcal{E})$ strategically and directly influences the opinions of its neighboring agents on the graph. Therefore, all agent interactions, restricted by the graph topology, are strategic. The agents decide on their influence efforts, i.e., find strategies under a noncooperative game framework, by minimizing a personal cost function that consists of opinion disagreement costs with their neighborhood and their own initial opinions (i.e., stubbornness costs). 

For optimization purposes for strategizing agents in a social network, let the vector $u_i(t)\in \mathit{\mathbb{R}}^{d\times 1}$ denote the agent's control or influence, and the input $\varpi_i(t)$ be an extraneous input representing an unknown disturbance influencing the system ($\varpi_i(t)\in L_2$). Then, the dynamics (\ref{eq:dynamics00}) becomes
\begin{equation} \label{eq:dynamics0}
 \dot{x}_i(t)=\sum_{j\in\mathcal{N}_i}W_{ij}(x_j(t)-x_i(t))+b_iu_i(t)+\xi_i\varpi_i(t).   
\end{equation}
Here, $b_i\in \mathit{\mathbb{R}}^{m\times d}$ and $\xi_i\in \mathit{\mathbb{R}}^{m\times d}$ represent the dependency structure of multiple-control inputs. $b_i$'s entries denote the level of trust of agent $i\in\mathcal{V}$ in the declared opinion or her actions (influence efforts). $\xi_i$ models the disturbance's internal structure from the point of view of agent $i\in\mathcal{V}$.

Let us define the following finite time horizon quadratic cost for agent $i\in \mathcal{V}$ to minimize in $t\in[0,t_f]$ in order to determine her actions 
\begin{align}
\label{eq:quadratic-cost0}
J_i=\frac{1}{2}\Vert x_i(t_f) - x_i(0)\Vert_{W_{ii}}^2&+\frac{1}{2} \sum_{j\in \mathcal{N}_i}\Vert x_i(t_f) - x_j(t_f)\Vert_{W_{ij}}^2+\nonumber\\&\frac{1}{2} \int_{0}^{t_f}\big(\Vert u_i(t) \Vert_{R_i}^2-\Vert\varpi_i(t)\Vert_{R_{\varpi_i}}^2\big)~\mathrm{dt}, 
\end{align}
where $R_i\in \mathit{\mathbb{R}}^{d\times d}$ and $R_{\varpi_i}\in \mathit{\mathbb{R}}^{d\times d}$ are symmetric positive definite, and $t_f$ is the terminal time. Matrix $R_{\varpi_i}$ models the agent's expectation about the severity of the disturbance $\xi_i\varpi_i(t)$. Agents who expect insignificant disturbances choose large $R_{\varpi_i}$~\cite{Engwerda2017}. A finite terminal time refers to situations where the attitudes of social groups are analyzed in a limited amount of time, like during a political or advertising campaign with an end date. 

The cost function in (\ref{eq:quadratic-cost0}) has three terms. The first term is a weighted difference between the terminal opinions vector $x_i(t_f)$ and the initial opinions vector $x_i(0)$ (or the agent's prejudices). Agents with prejudices (i.e., $x_i(0)\neq 0$) are stubborn, thus this term reflects the agent's stubbornness. The second term is a weighted sum of differences between the terminal opinions vector $x_i(t_f)$ and the terminal opinions vector $x_j(t_f)$ for each neighboring $j\in\mathcal{N}_i$ according to the communication graph $\mathcal{G}$. The last term is the weighted control or influence effort made during the entire opinion formation process, considering his expectations of the disturbance.

The proposed cost function in this paper unifies the graph definition of a network with a control effort in a multi-dimensional opinion space exposed to disturbances. By adopting it, each agent is trying to minimize their disagreement with their initial opinions and with the opinions of their neighbors at the terminal time with the least influence effort during the whole process of opinion formation. By moving toward their neighbors’ opinions, players can prevent additional costs from being incurred by their neighbors~\cite{BINDEL2015248}. Therefore, opinion formation behavior is characterized by minimizing the cost function at equilibrium. To that end, we like to find a control policy that minimizes the cost of disagreement with the least effort against a worst-case assumption of the disturbance. 

By concatenating the states of all agents into a vector $x(t)=[x_1^\top(t),\cdots,\\x_n^\top(t)]^\top$, the state equation with a given initial state is given by
\begin{equation} \label{eq:dynamics}
 \dot{x}(t)=-\mathcal{L}x(t)+\sum_{i=1}^nB_iu_i(t)+\sum_{i=1}^nB_{\varpi_i}\varpi_i(t), \quad x(0)=x_0, 
\end{equation}
where $\mathcal{L}=\mathcal{D}\mathcal{W}\mathcal{D}^\top$, and $\mathcal{W}=\mathrm{diag}(\cdots,W_{ij},\cdots)~\forall (i,j) \in \mathcal{E}$, $B_i=[0,\cdots,b_i^\top,\\\cdots,0]^\top$, $B_{\varpi_i}=[0,\cdots,\xi_i^\top,\cdots,0]^\top$. 

According to the sum-of-squares property of graph Laplacian, $J_i$ can be expressed as  
\begin{align}
\label{eq:quadratic-cost}
J_i&=\frac{1}{2} \Vert x(t_f) - x_0\Vert_{F_{i}}^2 +\frac{1}{2}\Vert x(t_f)\Vert_{\mathcal{L}_i}^2 +\frac{1}{2} \int_{0}^{t_f} \big(\Vert u_i(t) \Vert_{R_i}^2-\Vert\varpi_i(t)\Vert_{R_{\varpi_i}}^2\big)~\mathrm{dt}, 
\end{align}
where $F_i=\mathrm{diag}(0,\cdots,W_{ii},\cdots,0)$, $\mathcal{L}_i=\mathcal{D}\mathcal{W}_i\mathcal{D}^\top$, and $\mathcal{W}_i=\mathrm{diag}(0,\cdots,W_{ij}\\,\cdots,0)~\forall j \in \mathcal{N}_i$.

The cost function (\ref{eq:quadratic-cost}) and the dynamics (\ref{eq:dynamics}) pose a differential game problem \cite{Engwerda}. 
This category of problems is very promising for modeling and analyzing conflict situations in networked systems of self-interested dynamical agents. In the context of a differential game, each agent of the network is referred to as a player and thereby the game of opinion formation in (\ref{eq:quadratic-cost}) and (\ref{eq:dynamics}) has $n$ players. In this context, each player seeks the control $u_i(t)$ that minimizes her cost function $J_i$ subject to the game state equation with the given initial state in (\ref{eq:dynamics}). In other words, the players in the game seek to minimize their cost functions in order to find their control or influence strategies $u_i(t)$ while their opinions evolve according to the differential equation (\ref{eq:dynamics}).

The behavior of self-interested players in the game of opinion formation is best reflected via noncooperative game theory. Under the framework of noncooperative games, the players can not make binding agreements, and therefore, the solution (i.e., the Nash equilibrium) has to be self-enforcing, meaning that once it is agreed upon, nobody has the incentive to deviate from \cite{Damme}. In the next section, we derive the open-loop Nash/worst-case equilibrium solution for the game of coupled multi-dimensional opinion formation in terms of (\ref{eq:quadratic-cost}) and (\ref{eq:dynamics}).

\section{Nash/Worst-case Equilibrium Solution}\label{section:open-loop-Nash}

A Nash equilibrium is a strategy combination of all players in the game with the property that no one can gain a lower cost by unilaterally deviating from it. The open-loop Nash equilibrium is defined as a set of admissible actions ($u_1^*,\cdots,u_n^* $) if for all admissible ($u_1,\cdots,u_n$) the inequalities
$J_i (u_1^*,\cdots,u_{i-1}^*,u_i^*,u_{i+1}^*,\cdots,u_n^* )\leq
J_i (u_1^*,\cdots,u_{i-1}^*,u_i,u_{i+1}^*,\cdots,u_n^* )$
hold for $i\in\{1,\cdots,n\}$ where $u_i\in\Gamma_i$ and $\Gamma_i=\{u_i(t,x_0)|t\in[0,t_f]\}$ is the admissible strategy set for player $i$. 
The noncooperative linear quadratic differential game and the unique Nash equilibrium associated with it are discussed thoroughly in \cite{Engwerda}. 

Assume that for every strategy combination ($u_1,\cdots,u_n$) there exists a worst-case disturbance $\hat{\varpi}_i(u_1,\cdots,u_n)$ from the point of view of player $i$, i.e., $J_i\big(u_1,\cdots,u_n,\hat{\varpi}_i(u_1,\cdots,u_n)\big)\geq J_i\big(u_1,\cdots,u_n,\varpi\big)$ holds for each $\varpi$ and $i\in\{1,\cdots,n\}$. Then, a Nash/worst-case equilibrium is defined as a set of admissible actions ($u_1^*,\cdots,u_n^*$) that the inequalities
$J_i \big(u_1^*,\cdots,u_n^*,\hat{\varpi}_i(u_1^*,\cdots,u_n^*)\big)\\\leq
J_i \big(u_1^*,\cdots,u_{i-1}^*,u_i,u_{i+1}^*,\cdots,u_n^*,\hat{\varpi}_i(u_1^*,\cdots,u_{i-1}^*,u_i,u_{i+1}^*,\cdots,u_n^*)\big)$ 
hold for $i\in\{1,\cdots,n\}$ where $u_i\in\Gamma_i$. 

As presented in Subsection~\ref{sec:Nash}, the linear quadratic differential game (\ref{eq:dynamics}) and (\ref{eq:quadratic-cost}) converts to the nonsymmetric Riccati differential equation problem (\ref{eq:Riccati-2n}). Since solving (\ref{eq:Riccati-2n}) analytically is difficult, its approximate solution has to be obtained by numerical methods. Thereby, the Nash/worst-case equilibrium actions $u_i(t)$ and their corresponding worst-case disturbances $\varpi_i(t)$ in Subsection~\ref{sec:Nash} could only be approximated. 

In the following, we present the open-loop Nash/worst-case equilibrium solution for the underlying game of opinion formation. The solution is an explicit expression derived by applying the necessary conditions for optimality using Pontryagin’s principle. Before we begin, we present the following assumptions and lemmas.

\begin{lemma}\label{lemma-matrix}
Let $X^+$ and $\mathcal{R}(X)$ denote the Moore-Penrose inverse and the range, respectively, of a real matrix $X$. For nonnegative definite (or positive semidefinite) $X\in \mathit{\mathbb{R}}^{n\times n}$ and $Y\in \mathit{\mathbb{R}}^{n\times n}$, $X \overset{L}{\preceq}Y$ means $X$ is below $Y$ with respect to the Loewner partial ordering. $X \overset{L}{\preceq}Y$ holds if and only if $\mathcal{R}(X)\subseteq\mathcal{R}(Y)$ and $\mu_{\max}(Y^+X)\leq1$, where $\mu_{\max}(.)$ denotes the maximum eigenvalue of a matrix. 
\end{lemma}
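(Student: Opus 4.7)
The plan is to reduce the Loewner inequality to a symmetric eigenvalue inequality by working on the subspace $\mathcal{R}(Y)$, where $Y$ is positive definite and $Y^+$ coincides with the ordinary inverse. I would introduce $\Pi = Y Y^+ = Y^+ Y$, the orthogonal projection onto $\mathcal{R}(Y)$, together with the PSD square root $(Y^+)^{1/2}$, which satisfies $Y^{1/2}(Y^+)^{1/2} = (Y^+)^{1/2} Y^{1/2} = \Pi$. Setting $M := (Y^+)^{1/2} X (Y^+)^{1/2}$ produces a symmetric PSD matrix that shares all its nonzero eigenvalues with $Y^+ X$ by the cyclic-permutation property of spectra; in particular $\mu_{\max}(Y^+X) = \mu_{\max}(M)$, so the hypothesis can be transferred to a clean inequality about the symmetric PSD matrix $M$.

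For the forward direction I would assume $Y - X \succeq 0$ and first derive the range inclusion: for any $v \in \ker Y$, the chain $0 \le v^\top X v \le v^\top Y v = 0$ combined with the PSD property of $X$ forces $X v = 0$, so $\ker Y \subseteq \ker X$, which by symmetry of $X$ and $Y$ is equivalent to $\mathcal{R}(X) \subseteq \mathcal{R}(Y)$. Conjugating the inequality $Y - X \succeq 0$ by $(Y^+)^{1/2}$ on both sides would then give $\Pi - M \succeq 0$, hence $M \preceq \Pi \preceq I$ and $\mu_{\max}(M) \le 1$.

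For the converse I would use that the range inclusion implies both $\Pi X \Pi = X$ (because $X$ vanishes on $\ker Y \subseteq \ker X$ and is symmetric) and that $M$ itself vanishes on $\ker Y$; combined with $\mu_{\max}(M) \le 1$ this strengthens the eigenvalue bound to the matrix inequality $M \preceq \Pi$. Conjugating by $Y^{1/2}$ then produces $Y^{1/2}(\Pi - M) Y^{1/2} \succeq 0$, which, using $Y^{1/2} \Pi Y^{1/2} = Y$ and $Y^{1/2} M Y^{1/2} = \Pi X \Pi = X$, is precisely $Y - X \succeq 0$.

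The hard part will not be any deep step but the pseudo-inverse bookkeeping: I need the identities $(Y^+)^{1/2} Y^{1/2} = \Pi$ and $\Pi X \Pi = X$ (the latter resting on the range hypothesis) to validate both conjugation arguments, and I must handle $\ker Y$ consistently so that the classical positive-definite equivalence ``$X \preceq Y$ iff $Y^{-1/2} X Y^{-1/2} \preceq I$'' applies verbatim on $\mathcal{R}(Y)$ while both sides of $Y - X$ vanish on the kernel.
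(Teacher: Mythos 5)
Your proposal is correct, but it is worth noting that the paper does not prove this lemma at all: it simply states it and defers to Baksalary and Pukelsheim \cite{BAKSALARY1991135} (with \cite{GRO1997457} invoked only for the remark that $X \overset{L}{\preceq} Y$ means $Y-X$ is nonnegative definite), where the characterization is obtained within the broader study of the minus, star, and L\"owner partial orderings. Your argument is a self-contained elementary alternative: reduce to the symmetric matrix $M=(Y^+)^{1/2}X(Y^+)^{1/2}$, use that $AB$ and $BA$ share nonzero spectra to identify $\mu_{\max}(Y^+X)=\mu_{\max}(M)$, get the range inclusion in the forward direction from $0\le v^\top Xv\le v^\top Yv=0$ on $\ker Y$ together with positive semidefiniteness of $X$, and in the converse use $\Pi X\Pi=X$ (which is exactly where the range hypothesis enters) to undo the conjugation via $Y^{1/2}$. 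All the pseudo-inverse bookkeeping you flag does go through: $(Y^+)^{1/2}Y(Y^+)^{1/2}=\Pi$ and $Y^{1/2}(Y^+)^{1/2}=\Pi$ hold because $Y$ is symmetric PSD, and in the converse the step $M\preceq\Pi$ is even slightly more than you need, since $M\preceq I$ already yields $Y-X=Y^{1/2}(I-M)Y^{1/2}\succeq 0$ once $Y^{1/2}MY^{1/2}=\Pi X\Pi=X$ is established. What the paper's citation buys is brevity and a pointer to the general ordering theory; what your proof buys is a complete, verifiable argument that makes explicit exactly where positive semidefiniteness and the range condition are used, which is arguably more useful to a reader checking Assumption~\ref{assumption:L} and Lemma~\ref{lemma-nonsingular} downstream.
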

The proof can be found in~\cite{BAKSALARY1991135}. $X \overset{L}{\preceq}Y$ is equivalent to $Y-X$ being nonnegative definite for two nonnegative definite matrices $X$ and $Y$~\cite{GRO1997457}.
\begin{assumption}\label{assumption:L}
    $B_iR_i^{-1}B_i^\top \overset{L}{\preceq}B_{\varpi_i}R_{\varpi_i}^{-1}B_{\varpi_i}^\top$.
\end{assumption}

\begin{lemma}\label{lemma-nonsingular} 
 Let $\Phi(t_f,t)$ denote the state-transition matrix for $-\mathcal{L}$ (and since $-\mathcal{L}$ is time-invariant, $\Phi(t_f,t)=\mathrm{e}^{-(t_f-t)\mathcal{L}}$). Matrix
\begin{equation}\label{eq:Htf}
    H(t_f)=I-\sum_{i=1}^n\big(\Psi_i(t_f,0)(F_i+\mathcal{L}_i)\big),
\end{equation}
where 
 \begin{equation}\label{eq:matrix-Psi-i}
\Psi_i(t_f,0)=\int_0^{t_f}\Phi(t_f,\tau)\big(B_{\varpi_i}R_{\varpi_i}^{-1}B_{\varpi_i}^\top-B_iR_i^{-1}B_i^\top\big)\Phi^\top(t_f,\tau)~\mathrm{d\tau}, 
 \end{equation}
is nonsingular if Assumption~\ref{assumption:L} holds. 
\end{lemma}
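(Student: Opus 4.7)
The plan is to argue by contradiction: assume $H(t_f)$ is singular and produce a nonzero $v\in\mathit{\mathbb{R}}^{nm}$ satisfying
\begin{equation*}
v = \sum_{i=1}^n \Psi_i(t_f,0)\,(F_i+\mathcal{L}_i)\,v,
\end{equation*}
then derive a contradiction from Assumption~\ref{assumption:L}. First I would collect the positive semidefinite structure of the ingredients. By Assumption~\ref{assumption:L} together with Lemma~\ref{lemma-matrix}, the matrix $T_i:=B_{\varpi_i}R_{\varpi_i}^{-1}B_{\varpi_i}^\top-B_iR_i^{-1}B_i^\top$ is nonnegative definite, so the integrand $\Phi(t_f,\tau)T_i\Phi^\top(t_f,\tau)$ in (\ref{eq:matrix-Psi-i}) is a congruence of a PSD matrix for every $\tau\in[0,t_f]$; integrating preserves this, giving $\Psi_i(t_f,0)\succeq 0$. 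Likewise $F_i$ (block-diagonal with the PSD stubbornness block $W_{ii}$) and $\mathcal{L}_i=\mathcal{D}\mathcal{W}_i\mathcal{D}^\top$ (a congruence of a block-diagonal PSD weighting matrix) are PSD, so each $P_i:=F_i+\mathcal{L}_i$ is PSD.

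Next I would exploit the PSD-times-PSD structure. Although $\Psi_i P_i$ is not symmetric, it is similar to the symmetric PSD matrix $P_i^{1/2}\Psi_i P_i^{1/2}$ and therefore has real nonnegative spectrum. Taking $v^\top$ of the singularity relation yields $\|v\|^2=\sum_i v^\top\Psi_i P_i v$; introducing the symmetric square roots $\Psi_i^{1/2}$ and $P_i^{1/2}$ and applying the Cauchy--Schwarz (equivalently AM--GM) inequality to each summand would produce a bound of the form $\|v\|^2\le\sum_i\mu_{\max}\!\bigl(P_i^{1/2}\Psi_i P_i^{1/2}\bigr)\|v\|^2$ on the invariant subspace spanned by the relevant ranges. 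The plan is then to combine the range inclusion $\mathcal{R}(B_iR_i^{-1}B_i^\top)\subseteq\mathcal{R}(B_{\varpi_i}R_{\varpi_i}^{-1}B_{\varpi_i}^\top)$ from Lemma~\ref{lemma-matrix} with the inequality $\mu_{\max}(Y^{+}X)\le 1$ implied by Assumption~\ref{assumption:L} to control $\mu_{\max}(P_i^{1/2}\Psi_i P_i^{1/2})$, forcing $v=0$ and hence a contradiction.

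The principal obstacle I expect is the lack of symmetry of the sum $\sum_i\Psi_iP_i$: the individual products need not commute, so the spectral radius of the sum is not bounded by a sum of individual spectral radii, and common-eigenvector arguments do not apply verbatim. A compounding difficulty is that $\Psi_i(t_f,0)$ depends on $t_f$ through the state-transition matrix $\Phi(t_f,\tau)=\mathrm{e}^{-(t_f-\tau)\mathcal{L}}$ while $P_i$ does not, so the nonsingularity hinges on a subtle interplay between the horizon, the graph structure enforced by Assumption~\ref{assumption:connectivity} that shapes $\mathcal{L}$, and the disturbance-dominance condition of Assumption~\ref{assumption:L}. The hard step will be to pass to the invariant subspace determined by $\mathcal{R}(T_i)$ and exploit Lemma~\ref{lemma-matrix} to conclude strict Loewner inequalities tight enough to exclude unit eigenvalues of the non-symmetric sum.
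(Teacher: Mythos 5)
Your opening steps coincide with the paper's: you establish, exactly as the paper does, that under Assumption~\ref{assumption:L} (via Lemma~\ref{lemma-matrix}) each $\Psi_i(t_f,0)$ is symmetric positive semidefinite, that $F_i+\mathcal{L}_i$ is positive semidefinite, and hence that each product $\Psi_i(t_f,0)(F_i+\mathcal{L}_i)$ has real nonnegative spectrum. But from that point on your plan diverges into a magnitude argument that cannot be completed from the stated hypotheses, and you yourself flag the decisive step as open. Concretely: to force the contradiction from $v=\sum_{i}\Psi_i(t_f,0)(F_i+\mathcal{L}_i)v$ you would need something like $\sum_i\mu_{\max}\bigl((F_i+\mathcal{L}_i)^{1/2}\Psi_i(t_f,0)(F_i+\mathcal{L}_i)^{1/2}\bigr)<1$, and no such smallness estimate follows from Assumption~\ref{assumption:L}. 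That assumption compares $B_iR_i^{-1}B_i^\top$ with $B_{\varpi_i}R_{\varpi_i}^{-1}B_{\varpi_i}^\top$ and delivers only a sign statement ($\Psi_i(t_f,0)\succeq 0$); it places no upper bound on $\Psi_i(t_f,0)$ or on $\Psi_i(t_f,0)(F_i+\mathcal{L}_i)$. Indeed, shrinking $R_{\varpi_i}$ (a more severe expected disturbance), enlarging $t_f$, or scaling the weights $W_{ij}$ inflates these products arbitrarily while Assumption~\ref{assumption:L} remains satisfied, so the inequality your contradiction hinges on is simply not available, and the intermediate Cauchy--Schwarz step is also shaky because $v^\top\Psi_i(F_i+\mathcal{L}_i)v$ is not a quadratic form in a single symmetric matrix. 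The range-inclusion and $\mu_{\max}(Y^{+}X)\leq 1$ facts from Lemma~\ref{lemma-matrix} concern the pair $(B_iR_i^{-1}B_i^\top,B_{\varpi_i}R_{\varpi_i}^{-1}B_{\varpi_i}^\top)$, not the pair $(\Psi_i,I)$ or $(\Psi_i(F_i+\mathcal{L}_i),I)$, so they cannot be transported into the bound you want.

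The paper's own proof uses a different mechanism at exactly the point where your plan stalls: it never attempts a size estimate. Having obtained nonnegative real parts for the spectra of the products $\Psi_i(t_f,0)(F_i+\mathcal{L}_i)$, it argues directly about the sign of the real parts of the eigenvalues of $H(t_f)=I-\sum_i\Psi_i(t_f,0)(F_i+\mathcal{L}_i)$ and concludes invertibility from that spectral-sign reasoning, rather than from any bound of the form ``spectral radius less than one.'' So the missing idea in your proposal is precisely the step you defer to future work (excluding the eigenvalue $1$ of the nonsymmetric sum), and the quantitative route you sketch for it would fail because the assumptions contain no smallness information; as written, the proposal is a plan with the essential step unproven rather than a proof.
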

\begin{proof}
Firstly, the eigenvalues of $F_i+\mathcal{L}_i$ have nonnegative real parts. This is because \textit{i}) the nonzero eigenvalues of Laplacian $L_i$ and $\mathcal{L}_i$ have nonnegative real parts, and \textit{ii}) the eigenvalues of nonnegative diagonal matrix $W_{ii}$ and diagonal $F_i$ are real nonnegative. Secondly, the eigenvalues of $\Psi_i(t_f,0)$ defined in (\ref{eq:matrix-Psi-i}) are real and nonnegative. Let us rewrite (\ref{eq:matrix-Psi-i}) as $\Psi_i(t_f,0)=\Psi_{i1}(t_f,0)-\Psi_{i2}(t_f,0)$ where
 \begin{align*}
 \Psi_{i1}(t_f,0)&=\int_0^{t_f}\Phi(t_f,\tau)B_{\varpi_i}R_{\varpi_i}^{-1}B_{\varpi_i}^\top\Phi^\top(t_f,\tau)~\mathrm{d\tau},\\
 \Psi_{i2}(t_f,0)&=\int_0^{t_f}\Phi(t_f,\tau)B_iR_i^{-1}B_i^\top\Phi^\top(t_f,\tau)~\mathrm{d\tau}.
 \end{align*}
Both $\Psi_{i1}(t_f,0)$ and $\Psi_{i2}(t_f,0)$ are positive definite (see Theorem 12.6.18. in~\cite{Matrix-book}) as well as symmetric (see Fact 3.7.2.in~\cite{Matrix-book}). According to Assumption~\ref{assumption:L}, $\Psi_{i}(t_f,0)$ is positive semidefinite. Thus, its eigenvalues are real and nonnegative. According to its spectrum property, the eigenvalues of the product $\Psi_i(t_f,0)(F_i+\mathcal{L}_i)$ are the eigenvalue products between $\Psi_i(t_f,0)$ and $F_i+\mathcal{L}_i$ and hence have nonnegative real parts. Finally, the eigenvalues of $H(t_f)$ have positive real parts, and the existence of $H^{-1}(t_f)$ is concluded.
\end{proof}

\begin{remark}
    For the case when Nash/worst-case equilibrium reduces to Nash equilibrium, in (\ref{eq:matrix-Psi-i}) $B_{\varpi_i}R_{\varpi_i}^{-1}B_{\varpi_i}^\top=0$, and the nonsingularity of $H(t_f)$ is obvious from Lemma~\ref{lemma-nonsingular} without Assumption~\ref{assumption:L}.  
\end{remark}

\begin{theorem}\label{theorem:open-Nash}
 The opinion formation as the $n$-player noncooperative differential game in (\ref{eq:dynamics}) and (\ref{eq:quadratic-cost}) admits a unique open-loop Nash/worst-case equilibrium. The unique Nash/worst-case actions, corresponding worst-case disturbances, as well as the associated state trajectory are given by
 \begin{align}
         u_i(t&)=-R_i^{-1}B_i^\top\Phi^\top(t_f,t)\Big[(F_i+\mathcal{L}_i)H^{-1}(t_f)G(t_f)-F_i\Big]x_0, \label{eq:open-Nash}\\
         \varpi_i(t&)=R_{\varpi_i}^{-1}B_{\varpi_i}^\top\Phi^\top(t_f,t)\Big[(F_i+\mathcal{L}_i)H^{-1}(t_f)G(t_f)-F_i\Big]x_0, \label{eq:dist}\\
         x(t)&=\Big(\Phi(t,0)+\Psi(t,0)\Big[(F+P)H^{-1}(t_f)G(t_f)-F\Big]\Big)x_0,  \label{eq:open-Nash-tr}
 \end{align}
 where
 \begin{align}
     &H(t_f)=I-\Psi(t_f,0)(F+P), \label{eq:matrix-H}\\ &G(t_f)=\Phi(t_f,0)-\Psi(t_f,0)F,\label{eq:matrix-G}\\
  &\Psi(t_f,0)=[\Psi_1(t_f,0),\cdots,\Psi_n(t_f,0)],\label{eq:matrix-Psi}\\
  &F=[F_1^\top,\cdots,F_n^\top]^\top, \label{eq:matrix-F}\\
  &P=[\mathcal{L}_1^\top,\cdots,\mathcal{L}_n^\top]^\top.\label{eq:matrix-P}
 \end{align}
\end{theorem}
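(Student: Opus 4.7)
The plan is to apply Pontryagin's minimum principle to each player's optimization problem, treating the worst-case disturbance as a maximizer. For player $i$, form the Hamiltonian
\begin{equation*}
\mathcal{H}_i = \tfrac{1}{2} u_i^\top R_i u_i - \tfrac{1}{2}\varpi_i^\top R_{\varpi_i}\varpi_i + \lambda_i^\top\Big(-\mathcal{L}x + \sum_{j} B_j u_j + \sum_{j} B_{\varpi_j}\varpi_j\Big).
\end{equation*}
The stationarity conditions $\partial\mathcal{H}_i/\partial u_i = 0$ and $\partial\mathcal{H}_i/\partial \varpi_i = 0$ yield $u_i(t) = -R_i^{-1}B_i^\top \lambda_i(t)$ and $\varpi_i(t) = R_{\varpi_i}^{-1}B_{\varpi_i}^\top \lambda_i(t)$; strict convexity in $u_i$ (since $R_i\succ 0$) and strict concavity in $\varpi_i$ (since $R_{\varpi_i}\succ 0$) together with the linear dynamics make these conditions both necessary and sufficient, which is where uniqueness will come from. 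The costate obeys $\dot{\lambda}_i = \mathcal{L}^\top\lambda_i$ with transversality $\lambda_i(t_f) = (F_i+\mathcal{L}_i)x(t_f) - F_i x_0$, obtained by differentiating the terminal cost in \eqref{eq:quadratic-cost} (the two quadratic pieces contribute $F_i(x(t_f)-x_0)$ and $\mathcal{L}_i x(t_f)$ since $F_i,\mathcal{L}_i$ are symmetric).

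Next I would integrate the costate equation backward; since $-\mathcal{L}$ is time-invariant, $\lambda_i(t) = \Phi^\top(t_f,t)\lambda_i(t_f)$ with $\Phi(t_f,t) = \mathrm{e}^{-(t_f-t)\mathcal{L}}$. Substituting the stationarity expressions into \eqref{eq:dynamics} gives the closed-loop state equation
\begin{equation*}
\dot{x}(t) = -\mathcal{L}x(t) + \sum_{i=1}^n \bigl(B_{\varpi_i}R_{\varpi_i}^{-1}B_{\varpi_i}^\top - B_i R_i^{-1}B_i^\top\bigr)\Phi^\top(t_f,t)\lambda_i(t_f).
\end{equation*}
Variation of parameters then yields $x(t) = \Phi(t,0)x_0 + \sum_i \Psi_i(t,0)\lambda_i(t_f)$, where each $\Psi_i(t,0)$ has the integral form of \eqref{eq:matrix-Psi-i} with $t$ replacing $t_f$ on the outer (left) transition matrix.

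The key algebraic step is to evaluate at $t=t_f$ and use the terminal condition on $\lambda_i$ to obtain a self-consistency equation for $x(t_f)$. With the stacked notation $\Psi(t_f,0)$, $F$, $P$ defined in \eqref{eq:matrix-Psi}--\eqref{eq:matrix-P}, the identity $\sum_i \Psi_i(t_f,0)(F_i+\mathcal{L}_i) = \Psi(t_f,0)(F+P)$ reduces the fixed-point condition to
\begin{equation*}
\bigl[I - \Psi(t_f,0)(F+P)\bigr]x(t_f) = \bigl[\Phi(t_f,0) - \Psi(t_f,0)F\bigr]x_0,
\end{equation*}
i.e. $H(t_f)x(t_f) = G(t_f)x_0$. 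Invoking Lemma~\ref{lemma-nonsingular} (which uses Assumption~\ref{assumption:L}) to invert $H(t_f)$ produces $x(t_f) = H^{-1}(t_f)G(t_f)x_0$, and therefore $\lambda_i(t_f) = [(F_i+\mathcal{L}_i)H^{-1}(t_f)G(t_f) - F_i]x_0$. Plugging this into $u_i(t) = -R_i^{-1}B_i^\top\Phi^\top(t_f,t)\lambda_i(t_f)$, $\varpi_i(t) = R_{\varpi_i}^{-1}B_{\varpi_i}^\top\Phi^\top(t_f,t)\lambda_i(t_f)$, and into the trajectory expression, yields \eqref{eq:open-Nash}--\eqref{eq:open-Nash-tr}. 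Uniqueness follows because the stationarity conditions are sufficient for each player's optimization, so the only candidate equilibrium is the one produced by this construction, and $H(t_f)^{-1}$ makes that candidate unique.

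The main obstacle I anticipate is bookkeeping, not analysis: keeping the block structures consistent when collapsing the sums over $i$ into the stacked matrices $\Psi(t_f,0),F,P$, and being careful that $\Psi_i(t,0)$ for $t<t_f$ is not symmetric (the outer and inner transition matrices use different time arguments), so one should resist the temptation to transpose. The use of Lemma~\ref{lemma-nonsingular} for invertibility of $H(t_f)$ is the only nontrivial analytic input, and it is already available.
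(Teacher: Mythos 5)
Your proposal follows essentially the same route as the paper's proof: Pontryagin stationarity for each player and the worst-case disturbance, the costate solution $\lambda_i(t)=\Phi^\top(t_f,t)\lambda_i(t_f)$, variation of parameters to get the fixed-point equation $H(t_f)x(t_f)=G(t_f)x_0$, and Lemma~\ref{lemma-nonsingular} to invert $H(t_f)$ and back-substitute into \eqref{eq:open-Nash}--\eqref{eq:open-Nash-tr}. The only differences are cosmetic (you carry $\mathcal{L}^\top$ and $\Phi^\top$ explicitly where the paper exploits symmetry of the Laplacian), so the argument matches the paper's.
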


\begin{proof}
The Hamiltonian for every $i\in\mathcal{V}$ is defined as 
\begin{align*}
        \mathcal{H}_i=\frac{1}{2}\Vert u_i(t) \Vert_{R_i}^2&-\frac{1}{2}\Vert\varpi(t)\Vert_{R_{\varpi_i}}^2+\nonumber\\&\lambda_i^\top(t) \Big(-\mathcal{L}x(t)+\sum_{i=1}^nB_iu_i(t)+\sum_{i=1}^nB_{\varpi_i}\varpi_i(t)\Big), 
\end{align*}
where $\lambda_i(t)$ is a co-state vector. According to Pontryagin’s principle, the necessary conditions for optimality are
\begin{align*}
    &\frac{\partial \mathcal{H}_i}{\partial u_i}=0, \quad \frac{\partial \mathcal{H}_i}{\partial \varpi_i}=0, \quad \dot{\lambda}_i(t)=-\frac{\partial \mathcal{H}_i}{\partial x}. 
\end{align*}
Equivalently, 
\begin{align}
    &u_i(t)=-R_i^{-1}B_i^\top\lambda_i(t), \label{eq:neccesary-u}\\
     &\varpi_i(t)=R_{\varpi_i}^{-1}B_{\varpi_i}^\top\lambda_i(t), \label{eq:neccesary-d}\\
    &\dot{\lambda}_i(t)=\mathcal{L} \lambda_i(t), \label{eq:neccesary-lam}
\end{align}
subject to the terminal condition
\begin{equation}\label{eq:Lambda-compact}
        \lambda_i(t_f)=(F_i+\mathcal{L}_i)x(t_f)-F_ix_0.
\end{equation}

The solution of (\ref{eq:neccesary-lam}) is 
\begin{equation}\label{eq:LTVsoln}
\lambda_i(t)=\Phi(t_f,t)\lambda_i(t_f).
\end{equation} 
Substituting this solution in (\ref{eq:neccesary-u}) and (\ref{eq:neccesary-d}) yield
\begin{equation}\label{eq:Nash00}
   u_i(t)=-R_i^{-1}B_i^\top\Phi(t_f,t)\lambda_i(t_f), 
\end{equation}
\begin{equation}\label{eq:Nash00d}
   \varpi_i(t)=R_{\varpi_i}^{-1}B_{\varpi_i}^\top\Phi(t_f,t)\lambda_i(t_f). 
\end{equation}

Substituting (\ref{eq:Nash00}) and (\ref{eq:Nash00d}) in the state dynamics equation, we get 
\begin{equation}\label{eq:dynamics-lambda}
    \dot{x}(t)=-\mathcal{L}x(t)+\sum_{i=1}^n\big(B_{\varpi_i}R_{\varpi_i}^{-1}B_{\varpi_i}^\top-B_iR_i^{-1}B_i^\top\big)\Phi(t_f,t)\lambda_i(t_f).
\end{equation}
Equation (\ref{eq:dynamics-lambda}) has a solution at $t_f$ as follows
\begin{align}\label{eq:state-equation-tf}
x(t_f)=\Phi(t_f,0)x_0+\Psi(t_f,0)\lambda(t_f),
\end{align}
with $\Psi(t_f,0)$ given in (\ref{eq:matrix-Psi}). Stacking (\ref{eq:Lambda-compact}) for $i = 1, \cdots, n$ yields
\begin{equation} \label{eq:noncompact}
    \begin{bmatrix}
        \lambda_1(t_f)\\ \vdots \\ \lambda_n(t_f)
    \end{bmatrix}=\begin{bmatrix}
        (F_1+\mathcal{L}_1)x(t_f)\\ \vdots \\ (F_n+\mathcal{L}_n)x(t_f)
    \end{bmatrix}-\begin{bmatrix}
        F_1x_0\\ \vdots \\ F_nx_0
    \end{bmatrix}.
\end{equation} For the sake of notational simplicity, (\ref{eq:noncompact}) is shown as
\begin{equation}\label{eq:lamtf-compact}
    \lambda(t_f)=(F+P)x(t_f)-Fx_0,
\end{equation}
with $F$ and $L$ defined in (\ref{eq:matrix-F}) and (\ref{eq:matrix-P}), respectively, and $\lambda(t_f)=[\lambda_1^\top(t_f),\cdots,\\\lambda_n^\top(t_f)]^\top$. Substituting $\lambda(t_f)$ from (\ref{eq:lamtf-compact}) then into $x(t_f)$ in (\ref{eq:state-equation-tf}) yields 
 \begin{align*}
x(t_f)=\Phi(t_f,0)x_0+\Psi(t_f,0)(F+P)x(t_f)-\Psi(t_f,0)Fx_0,
\end{align*}
or alternatively,
 \begin{align}\label{eq:state-equation-simplify}
\Big(I-\Psi(t_f,0)(F+P)\Big)x(t_f)=\Big(\Phi(t_f,0)-\Psi(t_f,0)F\Big)x_0.
\end{align}

Therefore, the game has an open-loop Nash equilibrium for any initial state $x_0$ if and only if (\ref{eq:state-equation-simplify}) is satisfied for any arbitrary final state $x(t_f)$. If so, the equilibrium actions are unique and exist for all $t\in[0,t_f]$. Otherwise, the game does not have a unique open-loop Nash equilibrium for every initial state $x_0$. 

Using the notation $H(t_f)$ and $G(t_f)$ in (\ref{eq:matrix-H}) and (\ref{eq:matrix-G}), respectively, we can rewrite (\ref{eq:state-equation-simplify}) as
 \begin{align}\label{eq:state-x(tf)}
x(t_f)=H^{-1}(t_f)G(t_f)x_0.
\end{align}

Finally, by substituting (\ref{eq:state-x(tf)}) into (\ref{eq:Lambda-compact}) and re-substituting (\ref{eq:Lambda-compact}) in (\ref{eq:neccesary-u}) and (\ref{eq:neccesary-d}), we obtain (\ref{eq:open-Nash}) and (\ref{eq:dist}), respectively. 

The solution of (\ref{eq:dynamics-lambda}) at $t$ is given by
\begin{equation*}
    x(t)=\Phi(t,0)x_0+\Psi(t,0)\lambda(t_f). 
\end{equation*}
Substituting $\lambda(t_f)$ from (\ref{eq:lamtf-compact}) and then re-submitting $x(t_f)$ from (\ref{eq:state-x(tf)}), we have 
\begin{align}
x(t)&=\Phi(t,0)x_0+\Psi(t,0)(F+P)x(t_f)-\Psi(t,0)Fx_0 \nonumber \\
&=\Phi(t,0)x_0+\Psi(t,0)(F+P)H^{-1}(t_f)G(t_f)x_0-\Psi(t,0)Fx_0, \nonumber 
\end{align}
or in its final form (\ref{eq:open-Nash-tr}). 
This concludes the proof.
\end{proof}

From Theorem~\ref{theorem:open-Nash}, it is obvious that the unique Nash/worst-case equilibrium actions and their corresponding worst-case disturbances exist if and only if $H(t_f)$ in (\ref{eq:Htf}) has an inverse. 

Let $M=(F_i+\mathcal{L}_i)H^{-1}(t_f)G(t_f)-F_i$. So,
\begin{align*}
    u_i(t)=-R_i^{-1}B_i^\top\Phi^\top(t_f,t)Mx_0, \\
    \varpi_i(t)=R_{\varpi_i}^{-1}B_{\varpi_i}^\top\Phi^\top(t_f,t)Mx_0.
\end{align*}
Additionally, assume that $R_i=R_i^\top$ and $R_{\varpi_i}=R_{\varpi_i}^\top$. Thus, $(R_i^{-1})^\top=(R_i^\top)^{-1}=R_i^{-1}$.

The worst-case expected costs by player $i$ are
\small
\begin{align*}
    &J_i=\frac{1}{2} \int_{0}^{t_f} \big( u_i^\top(t)R_iu_i(t)-\varpi_i^\top(t)R_{\varpi_i}\varpi_i(t)\big)~\mathrm{dt}\nonumber\\&=-\frac{1}{2}x_0^\top M^\top \int_{0}^{t_f} \Phi(t_f,t)\big( B_{\varpi_i}R_{\varpi_i}^{-1}B_{\varpi_i}^\top-B_iR_i^{-1}B_i^\top\big)\Phi^\top(t_f,t)~\mathrm{dt}Mx_0\nonumber\\&=-\frac{1}{2}x_0^\top M^\top \Psi_i(t_f,0)Mx_0.
\end{align*}
\normalsize
The terminal opinion formation error for player $i$ is
\begin{align}\label{eq:terminal-error-i}
E_i&=\frac{1}{2}\Vert x_i(t_f) - x_i(0)\Vert_{W_{ii}}^2+\frac{1}{2} \sum_{j\in \mathcal{N}_i}\Vert x_i(t_f) - x_j(t_f)\Vert_{W_{ij}}^2\nonumber\\&=\frac{1}{2} \Vert x(t_f) - x_0\Vert_{F_{i}}^2 +\frac{1}{2}\Vert x(t_f)\Vert_{\mathcal{L}_i}^2.
\end{align}

\section{Distributed Implementation of Game Strategies}\label{section:distributed}

It is obvious from (\ref{eq:open-Nash}), (\ref{eq:dist}), and (\ref{eq:open-Nash-tr}) that the Nash/worst-case actions, their corresponding worst-case disturbances, and state trajectory require the vector $x_0$ from all players or nodes in the communication graph. Consequently, control actions (\ref{eq:open-Nash}) and (\ref{eq:dist}) are compatible only with a fully connected communication graph where each agent has access to global information. A complete graph resembles a social network in which each agent knows all other agents and communicates with them. In practice, since each agent usually does not know all the other agents, the graph is incomplete. In such a network, each player has access to her immediate neighbors' information in the opinion network's communication graph and does not share her information with non-neighbor agents. Therefore, the distributed implementation of control actions (\ref{eq:open-Nash}) and (\ref{eq:dist}) is necessary for incomplete graphs. Moreover, distributed control policies that could be implemented only with information from neighboring nodes have several advantages, such as reliability, scalability, and flexibility~\cite{distributed}. However, (\ref{eq:open-Nash}) and (\ref{eq:dist}) are incapable of being executed with the distributed information.

In the following, we propose a distributed implementation of (\ref{eq:open-Nash}) and (\ref{eq:dist}), which each agent can execute only with information from her graph neighbors. Inspired by the Nash strategy design approach in a distributed manner for UAV (unmanned aerial vehicle) formation control in~\cite{Lin2014}, for each agent $i$, we propose the following state estimator
 \begin{align}\label{eq:state-est}
         \hat{\dot x}_i(t)=&\delta_i\bigg\{\big(e_i^\top\Phi(t_f,0)e_i-e_i^\top\Psi_i(t_f,0)e_iW_{ii}\big)x_i(0)\nonumber\\&-\hat{x}_i(t)+e_i^\top\Psi_i(t_f,0)e_i\sum_{j\in\mathcal{N}_i}W_{ij}(\hat{x}_j(t)-\hat{x}_i(t))\bigg\},
 \end{align}
where $\delta_i$ is a positive scalar and $e_i=[0_{m\times m},\cdots,I_{m\times m},\cdots,0_{m\times m}]^\top$.

\begin{theorem}\label{theorem:distributed}
If each agent $i$ for all $i\in\mathcal{V}$ updates its state estimate $\hat{x}_i(t)$ according to (\ref{eq:state-est}), then
\begin{equation}\label{eq:estimate}
    \lim_{t\to\infty} \hat{x}(t)=x(t_f)
\end{equation}
where $\hat{x}(t)=[\hat{x}_1^\top(t),\cdots,\hat{x}_n^\top(t)]^\top$.
\end{theorem}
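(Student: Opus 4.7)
The plan is to (i) rewrite the $n$ coupled block-equations (\ref{eq:state-est}) as one linear time-invariant ODE in the stacked estimate $\hat x(t)=[\hat x_1^\top(t),\dots,\hat x_n^\top(t)]^\top$, (ii) argue that the dynamics matrix is Hurwitz so the trajectory converges exponentially to a unique equilibrium $\hat x^\infty$, and (iii) verify by direct algebraic matching that $\hat x^\infty$ coincides with the closed-form $x(t_f)=H^{-1}(t_f)G(t_f)x_0$ supplied by Theorem~\ref{theorem:open-Nash}.

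For Step 1 the key observation is that, thanks to the sum-of-squares representation (\ref{eq:sum-of-squares}) applied block-wise, $\sum_{j\in\mathcal{N}_i}W_{ij}(\hat x_j-\hat x_i)=-(\mathcal{L}\hat x)_i$. This turns (\ref{eq:state-est}) into
\[
\dot{\hat x}(t)=-\bar\Delta\bigl(I+\bar B\,\mathcal{L}\bigr)\hat x(t)+\bar\Delta\,\bar\Gamma\,x_0,
\]
where $\bar\Delta=\mathrm{blkdiag}(\delta_iI_m)$, $\bar B=\mathrm{blkdiag}\bigl(e_i^\top\Psi_i(t_f,0)e_i\bigr)$, and $\bar\Gamma=\mathrm{blkdiag}\bigl(e_i^\top\Phi(t_f,0)e_i-e_i^\top\Psi_i(t_f,0)e_iW_{ii}\bigr)$. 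For Step 2 I would recycle the argument used in Lemma~\ref{lemma-nonsingular}: under Assumption~\ref{assumption:L} each matrix $\Psi_i(t_f,0)$ is PSD, so its $(i,i)$ principal block $e_i^\top\Psi_i(t_f,0)e_i$ is PSD too, giving $\bar B\succeq 0$. Combined with $\mathcal{L}\succeq 0$, the eigenvalues of $I+\bar B\,\mathcal{L}$ lie in the open right half-plane; the positive diagonal pre-factor $\bar\Delta$ preserves positive-stability through the similarity $\bar\Delta^{1/2}(I+\bar B\,\mathcal{L})\bar\Delta^{-1/2}$, so $\hat x(t)\to\hat x^\infty:=(I+\bar B\,\mathcal{L})^{-1}\bar\Gamma\,x_0$ exponentially.

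Step 3 is the heart of the proof. I would start from the Theorem~\ref{theorem:open-Nash} identity $\bigl(I-\Psi(t_f,0)(F+P)\bigr)x(t_f)=\bigl(\Phi(t_f,0)-\Psi(t_f,0)F\bigr)x_0$ and read off its $i$-th block row. Exploiting $F_kx_0=e_kW_{kk}x_k(0)$, the block-sparse support of $F_k$ and $\mathcal{L}_k$ (restricted to agent $k$ and its neighbors), and the rank-factored form $\Psi_k(t_f,0)=\int_0^{t_f}\Phi(t_f,\tau)e_kM_ke_k^\top\Phi^\top(t_f,\tau)\,d\tau$ that follows from $B_k=e_kb_k$ and $B_{\varpi_k}=e_k\xi_k$, the $i$-th block row must collapse onto the local Laplacian sum $\sum_{j\in\mathcal{N}_i}W_{ij}\bigl(x_i(t_f)-x_j(t_f)\bigr)$ and the stubbornness term $W_{ii}x_i(0)$, both multiplied by the local surrogate $e_i^\top\Psi_i(t_f,0)e_i$ appearing in (\ref{eq:state-est}). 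That reduction exactly matches the steady-state relation $\bar\Gamma x_0=(I+\bar B\,\mathcal{L})\hat x^\infty$ under the assignment $\hat x^\infty=x(t_f)$, and uniqueness of the equilibrium then delivers (\ref{eq:estimate}).

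The main obstacle is the block reduction in Step 3: the global formula for $x(t_f)$ involves the full matrices $\Psi_k(t_f,0)$, which couple all $n$ agents through $\Phi(t_f,\tau)$, whereas the distributed estimator only exposes their $(i,i)$ blocks. Making the collapse of the cross-block contributions $e_i^\top\Psi_k(t_f,0)e_j$ into the local surrogate rigorous will require careful use of the rank-one structure $B_i=e_ib_i$, $B_{\varpi_i}=e_i\xi_i$ together with the block-sparsity of $F_k$ and $\mathcal{L}_k$; if the exact cancellation fails, the claim (\ref{eq:estimate}) would have to be interpreted in an asymptotic or approximate sense, and the proof should quantify the residual $\|\hat x^\infty-x(t_f)\|$ in terms of the off-diagonal mass of the $\Psi_k(t_f,0)$.
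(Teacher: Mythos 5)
Your plan has the right skeleton (stack the estimator, show the stacked dynamics are Hurwitz, identify the unique equilibrium with $x(t_f)$), but the proof of the theorem lives entirely in your Step 3, and that step is not carried out — you yourself concede that the ``collapse of the cross-block contributions'' may fail and that the claim might then only hold approximately. Concretely: your exact stacking gives the fixed-point equation $(I+\bar B\mathcal{L})\hat{x}^\infty=\bar\Gamma x_0$, which involves only the diagonal blocks $e_i^\top\Phi(t_f,0)e_i$ and $e_i^\top\Psi_i(t_f,0)e_i$, whereas $x(t_f)$ is characterized by (\ref{eq:state-equation-simplify-0}), i.e.\ $H(t_f)x(t_f)=\big(\Phi(t_f,0)-\Psi(t_f,0)F\big)x_0$, which involves the full matrices $\Phi(t_f,0)$ and $\Psi_k(t_f,0)$ (and, through $\Psi(t_f,0)(F+P)$, the full block rows $e_i^\top\Psi_k(t_f,0)$ for all $k$). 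Since $\Phi(t_f,\tau)=\mathrm{e}^{-(t_f-\tau)\mathcal{L}}$ is dense on a connected graph, the cross blocks $e_i^\top\Psi_k(t_f,0)e_j$ are generically nonzero, so there is no rank-one or sparsity miracle that collapses one linear system onto the other; the two equilibria are solutions of genuinely different equations, and equality cannot be obtained by the block-row reduction you sketch. Hence the central identity $\hat{x}^\infty=x(t_f)$ is missing, and the proposal does not prove the theorem as stated. (A secondary, fixable point: the similarity $\bar\Delta^{1/2}(I+\bar B\mathcal{L})\bar\Delta^{-1/2}$ only reproduces the spectrum of $I+\bar B\mathcal{L}$, not of $\bar\Delta(I+\bar B\mathcal{L})$; when $\bar B$ is nonsingular the clean argument is $\bar\Delta(I+\bar B\mathcal{L})=(\bar\Delta\bar B)(\bar B^{-1}+\mathcal{L})$, a product of symmetric positive definite matrices, and the singular case needs separate care.)

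For comparison, the paper's proof is much shorter and identifies the equilibrium immediately: it asserts that stacking (\ref{eq:state-est}) yields (\ref{eq:linear-system}), i.e.\ $\hat{\dot x}(t)=\Delta\big(\big(\Phi(t_f,0)-\Psi(t_f,0)F\big)x_0-H(t_f)\hat{x}(t)\big)$ with the \emph{global} matrices, then substitutes the Theorem~\ref{theorem:open-Nash} relation (\ref{eq:state-equation-simplify-0}) to obtain (\ref{eq:linear-system-f}), $\hat{\dot x}(t)=-\Delta H(t_f)\big(\hat{x}(t)-x(t_f)\big)$, and invokes Lemma~\ref{lemma-nonsingular} (eigenvalues of $H(t_f)$ have positive real parts) to conclude that $-\Delta H(t_f)$ is Hurwitz and $\hat{x}(t)\to x(t_f)$. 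In other words, the equivalence between the local estimator and the global stacked form — exactly the point where your Step 3 gets stuck, including the mismatch in which blocks of $\Psi_i(t_f,0)$ appear and in the structure of the Laplacian coupling term — is taken for granted in the paper's stacking step rather than derived. Your faithful stacking therefore exposes a real issue, but as a proof attempt of the stated result it is incomplete: you would either have to justify the paper's stacked identity (\ref{eq:linear-system}) from (\ref{eq:state-est}), or modify the estimator so that its stacked form is exactly $\Delta\big(\big(\Phi(t_f,0)-\Psi(t_f,0)F\big)x_0-H(t_f)\hat{x}(t)\big)$, before the Hurwitz argument can deliver (\ref{eq:estimate}).
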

\begin{proof}
Stacking (\ref{eq:state-est}) from $1$ to $n$ yields
\begin{align}\label{eq:linear-system}
         \hat{\dot x}(t)&=\Delta\left(\big(\Phi(t_f,0)-\Psi(t_f,0)F\big)x_0-\Big(I-\sum_{i=1}^n\big(\Psi_i(t_f,0)(F_i+\mathcal{L}_i)\big)\Big)\hat{x}(t)\right)\nonumber\\&=\Delta\left(\big(\Phi(t_f,0)-\Psi(t_f,0)F\big)x_0-(I-\Psi(t_f,0)(F+P))\hat{x}(t)\right)
         \nonumber\\&=\Delta\left(\big(\Phi(t_f,0)-\Psi(t_f,0)F\big)x_0-H(t_f)\hat{x}(t)\right) 
 \end{align}
where $\Delta=\mathrm{diag}(g_1,\cdots,g_n)\otimes I_{m\times m}$. 

Using the notation $H(t_f)$, (\ref{eq:state-equation-simplify}) can be rewritten as
 \begin{align}\label{eq:state-equation-simplify-0}
H(t_f)x(t_f)=\Big(\Phi(t_f,0)-\Psi(t_f,0)F\Big)x_0.
\end{align}
Substituting the left side from (\ref{eq:state-equation-simplify-0}) into (\ref{eq:linear-system}), we have
\begin{align}\label{eq:linear-system-f}
         \hat{\dot x}(t)=-\Delta H(t_f)\left(\hat{x}(t)-x(t_f)\right) 
 \end{align}
As shown in Lemma~\ref{lemma-nonsingular}, all the eigenvalues of matrix $H(t_f)$ have positive real parts, and thus matrix $-\Delta H(t_f)$ is Hurwitz. Therefore, the linear system in (\ref{eq:linear-system-f}) is asymptotically stable, and $\hat{x}(t)$ will converge to $x(t_f)$ at $t\to\infty$.
\end{proof}

Using the distributed state estimator (\ref{eq:state-est}), each agent $i\in\mathcal{V}$ can adopt the following distributed strategy
 \begin{align}\label{eq:open-Nash-distributed}
         \hat{u}_i(t&)=-R_i^{-1}B_i^\top\Big(W_{ii}(\hat{x}_i(t)-x_i(0))+\sum_{j\in\mathcal{N}_i}W_{ij}(\hat{x}_j(t)-\hat{x}_i(t))\Big), 
 \end{align}
and its corresponding distributed worst-case disturbance is
 \begin{align}\label{eq:dist-distributed}
         \hat{\varpi}_i(t&)=R_{\varpi_i}^{-1}B_{\varpi_i}^\top\Big(W_{ii}(\hat{x}_i(t)-x_i(0))+\sum_{j\in\mathcal{N}_i}W_{ij}(\hat{x}_j(t)-\hat{x}_i(t))\Big). 
 \end{align}

All agents will eventually converge to the open-loop Nash/worst-case strategy, its corresponding disturbance, and state trajectory with the arbitrary convergence speed determined by $\delta_i$ if they adopt distributed implementation (\ref{eq:open-Nash-distributed}), (\ref{eq:dist-distributed}), and (\ref{eq:state-est}). Notice that each agent requires only her own initial state and state estimates and the state estimates of her graph neighbors, which can be acquired at the beginning of the game.

The terminal opinion formation error for player $i$ when adopting distributed strategies is
\begin{align}\label{eq:terminal-error-i-dist}
\hat{E}_i&=\frac{1}{2}\Vert \hat{x}_i(t_f) - x_i(0)\Vert_{W_{ii}}^2+\frac{1}{2} \sum_{j\in \mathcal{N}_i}\Vert \hat{x}_i(t_f) - \hat{x}_j(t_f)\Vert_{W_{ij}}^2\nonumber\\&=\frac{1}{2} \Vert \hat{x}(t_f) - x_0\Vert_{F_{i}}^2 +\frac{1}{2}\Vert \hat{x}(t_f)\Vert_{\mathcal{L}_i}^2.
\end{align}

\section{Social Optimality}\label{section:optimal}

In the game of multi-dimensional opinion formation on the communication graph $\mathcal{G}(\mathcal{V},\mathcal{E})$, each player by minimizing the individual (local) cost function attains the locally optimal Nash equilibrium. However, the game equilibrium in general does not correspond to the global social optimum, which minimizes the sum of all players' costs. 
The global network dynamics can be rewritten as
\begin{equation} \label{eq:dynamics-g}
 \dot{x}(t)=-\mathcal{L}x(t)+Bu(t)+B_{\varpi}\varpi(t), \quad x(0)=x_0, 
\end{equation}
where $B=[B_1,\cdots,B_n]$, $u(t)=[u_1^\top(t),\cdots,u_n^\top(t)]^\top$, $B_{\varpi}=[B_{\varpi_1},\cdots,B_{\varpi_n}]$, and $\varpi(t)=[\varpi_1^\top(t),\cdots,\varpi_n^\top(t)]^\top$.
Social optimality utilizes an identical global objective (e.g., a cost function to minimize) as the opinion formation objective by simply adding all the individual players’ interests together. Therefore, the social optimality problem is associated with a social network with non-strategist agents in which individuals do not wish to execute their personal strategies but rather prefer to commit to the whole network's common objective. The global cost function becomes  
\begin{equation*}
    J=\sum_{i=1}^nJ_i,
\end{equation*}
or equivalently
\begin{align}
\label{eq:quadratic-cost-g}
J&=\frac{1}{2} \Vert x(t_f) - x_0\Vert_{\bar{F}}^2 +\frac{1}{2}\Vert x(t_f)\Vert_{\mathcal{L}}^2+\frac{1}{2} \int_{0}^{t_f}\big(\Vert u(t) \Vert_R^2-\Vert \varpi(t) \Vert_{R_{\varpi}}^2\big)~\mathrm{dt} ,
\end{align}
where $\bar{F}=\mathrm{diag}(W_{11},\cdots,W_{nn})$, $R=\mathrm{diag}(R_1,\cdots,R_n)$, and $R_{\varpi}=\mathrm{diag}(R_{\varpi_1},\\\cdots,R_{\varpi_n})$. 
The global cost function in (\ref{eq:quadratic-cost-g}) is known to all $i\in\mathcal{V}$ and is minimized with respect to (\ref{eq:dynamics-g}).

\begin{theorem}\label{theorem:optimal}
The socially optimal actions and their corresponding worst-case disturbances for opinion formation as a result of the optimal control problem in (\ref{eq:dynamics-g}) and (\ref{eq:quadratic-cost-g}) are
 \begin{align}
         u&(t)=-R^{-1}B^\top\Phi^\top(t_f,t)\Big[(\bar{F}+\mathcal{L})\bar{H}^{-1}(t_f)\bar{G}(t_f)-\bar{F}\Big]x_0, \label{eq:optimal}\\
         \varpi&(t)=R_\varpi^{-1}B_\varpi^\top\Phi^\top(t_f,t)\Big[(\bar{F}+\mathcal{L})\bar{H}^{-1}(t_f)\bar{G}(t_f)-\bar{F}\Big]x_0, \label{eq:optimal-dist}
 \end{align}
 where
  \begin{align*}
     &\bar{H}(t_f)=I-\bar{\Psi}(t_f,0)(\bar{F}+\mathcal{L}), \\ &\bar{G}(t_f)=\Phi(t_f,0)-\bar{\Psi}(t_f,0)\bar{F},\\
     &\bar{\Psi}(t_f,0)=\int_0^{t_f}\Phi(t_f,\tau)\big(B_{\varpi}R_{\varpi}^{-1}B_{\varpi}^\top-BR^{-1}B^\top\big)\Phi^\top(t_f,\tau)~\mathrm{d\tau}.
 \end{align*}
Furthermore, the associated state trajectory is given by
\begin{equation}\label{eq:optimal-tr}
     x(t)=\Big(\Phi(t,0)+\bar{\Psi}(t,0)\Big[(\bar{F}+\mathcal{L})\bar{H}^{-1}(t_f)\bar{G}(t_f)-\bar{F}\Big]\Big)x_0.  
\end{equation}
\end{theorem}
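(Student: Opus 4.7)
The plan is to follow the same Pontryagin-based template used in the proof of Theorem~\ref{theorem:open-Nash}, but collapsed to the single-agent (centralized) optimal control problem obtained by summing the individual costs. Since there is now only one cost functional $J$ and one aggregated pair $(u(t),\varpi(t))$, I form a single Hamiltonian
\[
\mathcal{H}=\tfrac{1}{2}\Vert u(t)\Vert_{R}^2-\tfrac{1}{2}\Vert \varpi(t)\Vert_{R_\varpi}^2+\lambda^\top(t)\bigl(-\mathcal{L}x(t)+Bu(t)+B_{\varpi}\varpi(t)\bigr),
\]
with a single co-state $\lambda(t)\in\mathbb{R}^{nm}$, rather than the $n$ separate Hamiltonians of the game setting. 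The stationarity conditions $\partial\mathcal{H}/\partial u=0$ and $\partial\mathcal{H}/\partial \varpi=0$ directly reproduce the aggregated analogues of (\ref{eq:neccesary-u})--(\ref{eq:neccesary-d}), namely $u(t)=-R^{-1}B^\top\lambda(t)$ and $\varpi(t)=R_{\varpi}^{-1}B_{\varpi}^\top\lambda(t)$, while the adjoint equation $\dot{\lambda}(t)=-\partial\mathcal{H}/\partial x=\mathcal{L}\lambda(t)$ matches (\ref{eq:neccesary-lam}).

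The transversality condition is read off the two non-integral terms of (\ref{eq:quadratic-cost-g}) and gives $\lambda(t_f)=(\bar{F}+\mathcal{L})x(t_f)-\bar{F}x_0$, the natural aggregate of (\ref{eq:Lambda-compact}). Because $-\mathcal{L}$ is time-invariant with state-transition matrix $\Phi$, the co-state solves in closed form as $\lambda(t)=\Phi(t_f,t)\lambda(t_f)$, so the entire problem reduces to determining $\lambda(t_f)$. I will next inject the expressions for $u(t)$ and $\varpi(t)$ into the dynamics (\ref{eq:dynamics-g}) and integrate on $[0,t_f]$, producing
\[
x(t_f)=\Phi(t_f,0)x_0+\bar{\Psi}(t_f,0)\lambda(t_f),
\]
which is the centralized counterpart of (\ref{eq:state-equation-tf}). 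Substituting the transversality condition yields
\[
\bigl(I-\bar{\Psi}(t_f,0)(\bar{F}+\mathcal{L})\bigr)x(t_f)=\bigl(\Phi(t_f,0)-\bar{\Psi}(t_f,0)\bar{F}\bigr)x_0,
\]
that is, $\bar{H}(t_f)x(t_f)=\bar{G}(t_f)x_0$, and hence $x(t_f)=\bar{H}^{-1}(t_f)\bar{G}(t_f)x_0$. Back-substituting into the transversality condition recovers $\lambda(t_f)=\bigl[(\bar{F}+\mathcal{L})\bar{H}^{-1}(t_f)\bar{G}(t_f)-\bar{F}\bigr]x_0$, and inserting this into $u(t)=-R^{-1}B^\top\Phi^\top(t_f,t)\lambda(t_f)$ and $\varpi(t)=R_{\varpi}^{-1}B_{\varpi}^\top\Phi^\top(t_f,t)\lambda(t_f)$ produces (\ref{eq:optimal}) and (\ref{eq:optimal-dist}). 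The trajectory (\ref{eq:optimal-tr}) follows by integrating (\ref{eq:dynamics-g}) over $[0,t]$ with the now-known $\lambda(t_f)$, which gives $x(t)=\Phi(t,0)x_0+\bar{\Psi}(t,0)\lambda(t_f)$ and collapses to (\ref{eq:optimal-tr}) after one simplification.

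The main obstacle I anticipate is justifying invertibility of $\bar{H}(t_f)$, which in Theorem~\ref{theorem:open-Nash} was handled by Lemma~\ref{lemma-nonsingular} under Assumption~\ref{assumption:L}. I expect the same argument to carry over verbatim once the player-indexed quantities $B_i,R_i,B_{\varpi_i},R_{\varpi_i},F_i+\mathcal{L}_i$ are replaced by their aggregates $B,R,B_{\varpi},R_{\varpi},\bar{F}+\mathcal{L}$: the matrix $\bar{\Psi}(t_f,0)$ decomposes as a symmetric positive-definite controllability Gramian minus another, so the aggregated form of Assumption~\ref{assumption:L} renders the difference positive semidefinite; $\bar{F}+\mathcal{L}$ has eigenvalues with nonnegative real parts by the same diagonal-plus-Laplacian reasoning; and therefore the eigenvalues of $\bar{H}(t_f)=I-\bar{\Psi}(t_f,0)(\bar{F}+\mathcal{L})$ have positive real parts. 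Aside from this invertibility check, every step is a direct single-functional reduction of the calculations already performed in Theorem~\ref{theorem:open-Nash}, so no new analytical ingredient is required and the proof is essentially a transcription with $n\mapsto 1$ in the indexing.
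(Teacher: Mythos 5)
Your proposal follows essentially the same route as the paper's own proof: Pontryagin's principle applied to the single aggregated cost, the co-state solution $\bar{\lambda}(t)=\Phi(t_f,t)\bar{\lambda}(t_f)$, and the same substitution chain through $\bar{H}(t_f)x(t_f)=\bar{G}(t_f)x_0$ to recover (\ref{eq:optimal}), (\ref{eq:optimal-dist}), and (\ref{eq:optimal-tr}). Your added remark on the invertibility of $\bar{H}(t_f)$ simply transcribes the Lemma~\ref{lemma-nonsingular} argument to the aggregated quantities (noting that $BR^{-1}B^\top$ and $B_\varpi R_\varpi^{-1}B_\varpi^\top$ are sums of the per-agent terms), which is consistent with, and slightly more explicit than, what the paper does.
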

\begin{proof}
Applying Pontryagin’s principle, the optimal control problem in (\ref{eq:dynamics-g}) and (\ref{eq:quadratic-cost-g}) converts to the following boundary value problem 
\begin{align*}
    &\dot{x}(t)=-\mathcal{L}x(t)+\big(B_\varpi R_\varpi^{-1}B_\varpi^\top-BR^{-1}B^\top\big)\bar{\lambda}(t), \quad x(0)=x_0,\\
    &\dot{\bar{\lambda}}(t)=\mathcal{L} \bar{\lambda}(t), \quad \bar{\lambda}(t_f)=\big(\bar{F}+\mathcal{L}\big)x(t_f)-\bar{F}x_0.
\end{align*}
Along with that, we have 
\begin{align*}
    &u(t)=-R^{-1}B^\top\bar{\lambda}(t),\\
     &\varpi(t)=R_{\varpi}^{-1}B_{\varpi}^\top\bar{\lambda}(t).
\end{align*}

The solution for $\dot{\bar{\lambda}}(t)$ is $\bar{\lambda}(t)=\Phi(t_f,t)\bar{\lambda}(t_f)$. Substituting this solution in the equation for $\dot{x}(t)$, and solving it for $t_f$, we get
\begin{align*}
x(t_f)=\Phi(t_f,0)x_0+\bar{\Psi}(t_f,0)\bar{\lambda}(t_f).
\end{align*}

Substituting $\bar{\lambda}(t_f)$ into $x(t_f)$ above yields 
 \begin{align*}
x(t_f)=\Phi(t_f,0)x_0+\bar{\Psi}(t_f,0)(\bar{F}+\mathcal{L})x(t_f)-\bar{\Psi}(t_f,0)\bar{F}x_0,
\end{align*}
or,
 \begin{align*}
\Big(I-\bar{\Psi}(t_f,0)(\bar{F}+\mathcal{L})\Big)x(t_f)=\Big(\Phi(t_f,0)-\bar{\Psi}(t_f,0)\bar{F}\Big)x_0.
\end{align*}
Using the notation $\bar{H}(t_f)$ and $\bar{G}(t_f)$, and by appropriate substitutions similar to the proof in Theorem~\ref{theorem:open-Nash}, we obtain (\ref{eq:optimal}) and (\ref{eq:optimal-dist}) as well as (\ref{eq:optimal-tr}).
\end{proof}

Under socially optimal actions, the total terminal opinion formation error is given by 
\begin{align} \label{eq:terminal-error}
E_o&=\frac{1}{2}\sum_{i\in \mathcal{V}}\Vert x_i(t_f) - x_i(0)\Vert_{W_{ii}}^2+\frac{1}{2} \sum_{\{i,j\}\in \mathcal{E}}\Vert x_i(t_f) - x_j(t_f)\Vert_{W_{ij}}^2\nonumber\\&=\frac{1}{2} \Vert x(t_f) - x_0\Vert_{\bar{F}}^2 +\frac{1}{2}\Vert x(t_f)\Vert_{\mathcal{L}}^2.
\end{align}

\section{Numerical Analysis}\label{section:simulation}
For the numerical analysis of the proposed game optimization model, we consider a small social network with five agents (i.e., $n=5$) in a two-dimensional opinion space (i.e., $m=2$). The underlying social graph topology is adapted from~\cite{Ahn}. This example graph was used to analyze opinion formation with the non-stubborn non-strategist agent model on a time-invariant undirected social graph. We consider this graph with directed edges.

The social graph topology is given in~Fig.~\ref{fig:graph} and the initial opinions of agents and the interpersonal influence matrices from~\cite{Ahn} are as $x_1(0)=[1,2]^\top$, $x_2(0)=[2,4]^\top$, $x_3(0)=[3,1]^\top$, $x_4(0)=[4,3]^\top$, and $x_5(0)=[5,6]^\top$, $W_{12}=W_{13}=W_{23}=W_{34}=W_{45}=I_{2\times 2}$. Note that all the weights of the above matrices are considered nonnegative; while zero denotes no coupling, a nonzero weight shows a positive coupling between the associated topics from two agents. This choice of interpersonal influence matrices means that the agents communicate on several unrelated topics on the information graph. As we expected and as seen from Fig.~\ref{fig:no-optimiz}(a), a consensus of final opinions about the average opinion in the network for the non-stubborn non-strategist agent model (\ref{eq:dynamics00}) was reached at a horizon length of $t_f=10$. To examine the evolution of opinions on  interdependent topics, we choose $W_{12}=W_{13}=W_{23}=W_{34}=W_{45}=\begin{bmatrix}
    1&1\\1&1
\end{bmatrix}$, so for all neighboring agents, both topics are fully coupled. Fig.~\ref{fig:no-optimiz}(b) shows that only agents 1 and 5 reach a consensus on both topics.
    
\begin{figure}[ht]
\centering
       \includegraphics[width=0.4\textwidth]{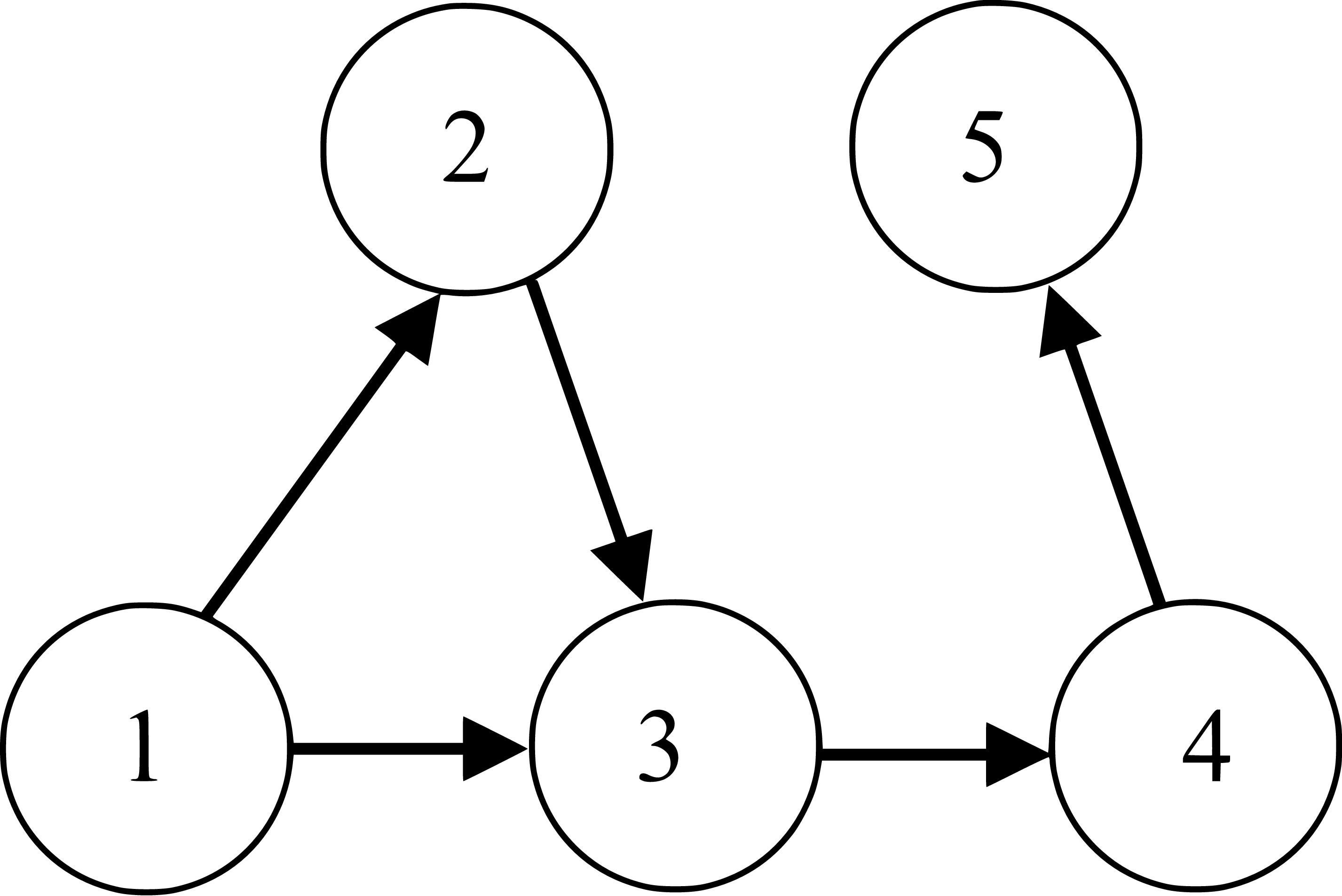}
\caption{Underlying communication graph topology for numerical analysis.}
\label{fig:graph}
\end{figure} 

\begin{figure}[ht]
   \centering
     \begin{subfigure}[b]{0.49\textwidth}
         \centering
         \includegraphics[width=\textwidth]{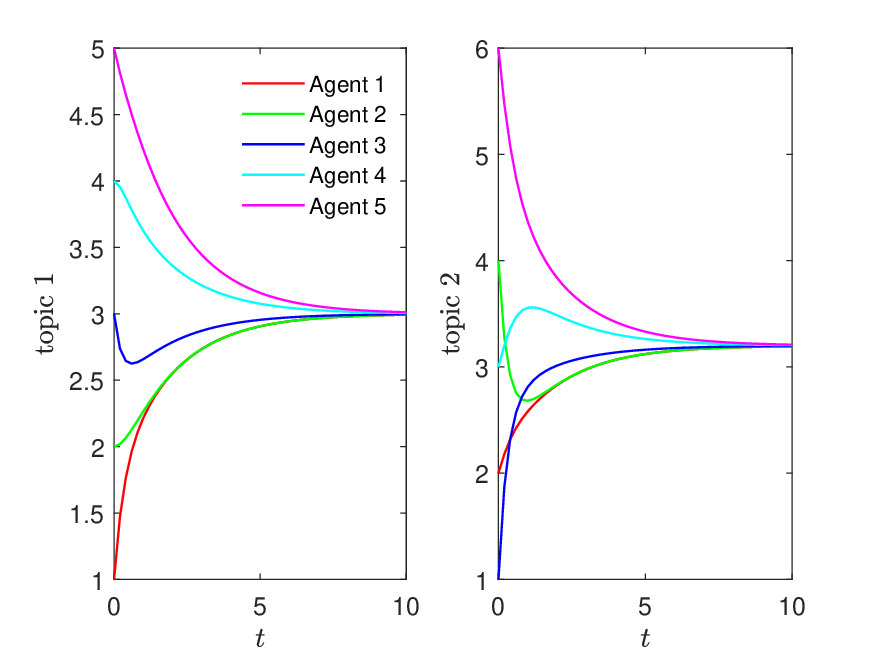}
         \caption{$W_{ij}=I_{2\times2},i=1,\cdots,5,j\in\mathcal{N}_i$}
     \end{subfigure}
     \begin{subfigure}[b]{0.49\textwidth}
         \centering
         \includegraphics[width=\textwidth]{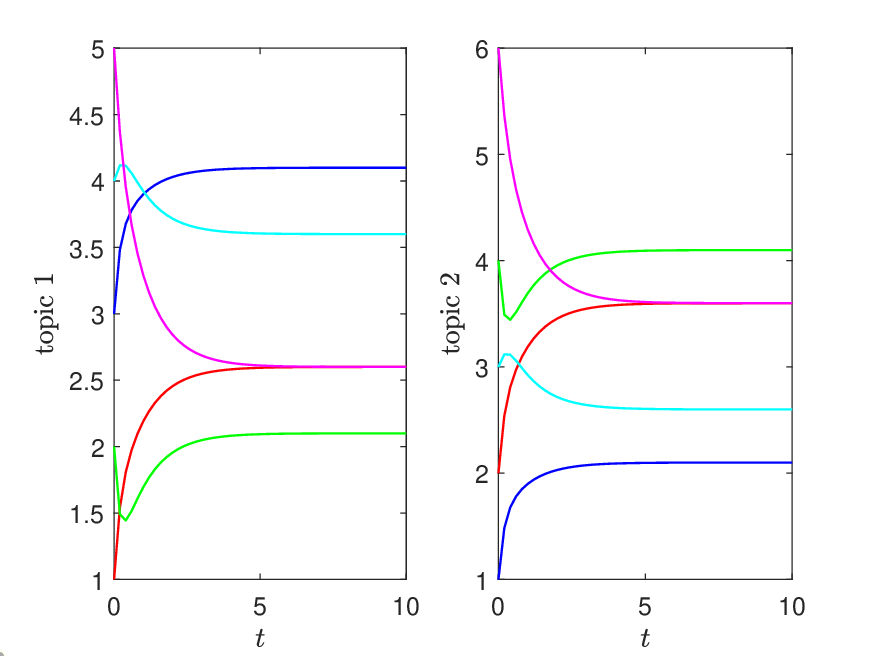}
         \caption{$W_{ij}=[1~1;~1~1],i=1,\cdots,5,j\in\mathcal{N}_i$}
     \end{subfigure}
\caption{Opinion trajectories associated with the non-stubborn non-strategist agent model (\ref{eq:dynamics00}) prior to optimization. (a) A consensus of final opinions about the average opinion in the network was reached. (b) Only agents 1 and 5 reached a consensus.} 
\label{fig:no-optimiz}
\end{figure} 

We continue the numerical analysis with the proposed game optimization model to investigate the impact of differential game optimization on opinion trajectories when the agents are strategists and non-stubborn (i.e., $W_{ii}=0$ for all $i$), and then when they are stubborn (i.e., $W_{ii}=I_{2\times 2}$ for all $i$).
Note that for a time-varying graph, $\mathcal{L}(t)$ satisfies the commutative condition, i.e., $\mathcal{L}(t)\mathcal{L}(\tau) = \mathcal{L}(\tau)\mathcal{L}(t)$ for all $t$ and $\tau$ in $[0,t_f]$. Therefore, the state transition matrix $\Phi(t_f,t)$ is found in explicit form by the spectral decomposition given in (\ref{eq:spec-decom}) in~\ref{app:time-varying}. Let $b_i=\xi_i=I_{2\times 2}$, $R_{i}=r_iI_{2\times 2}$ and $R_{\varpi_i}=r_{\varpi_i}I_{2\times 2}$, ($r_{\varpi_i}\neq r_i$) for $i\in\{1,\cdots,5\}$, then
\begin{align*}
&\Psi_i(t_f,0)=(r_{\varpi_i}-r_i)\int_0^{t_f}\Phi(t_f,t)\Phi^\top(t_f,t)~\mathrm{dt}. 
 \end{align*}

The outcome of the game optimization problem can be quantified by the results presented in Theorem~\ref{theorem:open-Nash} and their distributed counterparts in Section~\ref{section:distributed}. Fig.~\ref{fig:game} shows the evolution of the opinions associated with the game strategies as well as their distributed counterparts for the parameter selection of $r_i=2$, $r_{\varpi_i}=2$, and $\delta_i=1$. Although the distributed estimated opinion trajectories at the beginning are far from the opinion trajectories corresponding to the game strategies, they converge near the middle point of the horizon length. 

Let all agents expect significant disturbances and thereby choose a smaller $r_{\varpi_i}$. In Fig.~\ref{fig:game-dist}, we re-illustrate the results for $r_{\varpi_i}=0.5$. Compared to Fig.~\ref{fig:game} where $r_{\varpi_i}=2$, we observe that only for stubborn agents with unrelated topics ($W_{ii}=I_{2\times2},W_{ij}=I_{2\times2}$), their opinion trajectories change. Therefore, the decisions of stubborn agents on unrelated topics are affected the most by the changes in disturbance expectations.

\begin{figure}[ht]
\centering
     \begin{subfigure}[b]{0.49\textwidth}
         \centering
         \includegraphics[width=\textwidth]{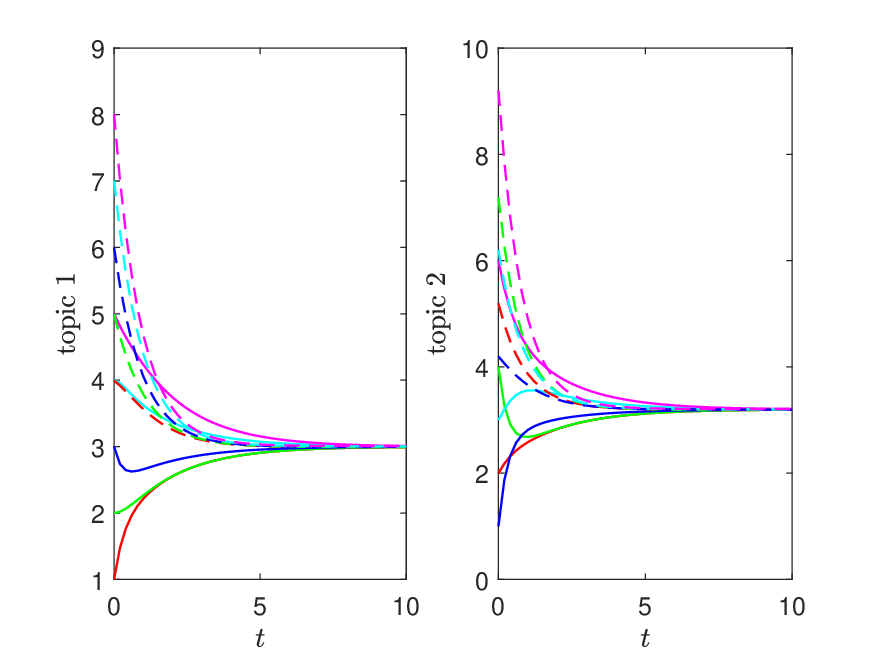}
         \caption{$W_{ii}=0_{2\times2},W_{ij}=I_{2\times2},r_{\varpi_i}=2$}
     \end{subfigure}
     \begin{subfigure}[b]{0.49\textwidth}
         \centering
         \includegraphics[width=\textwidth]{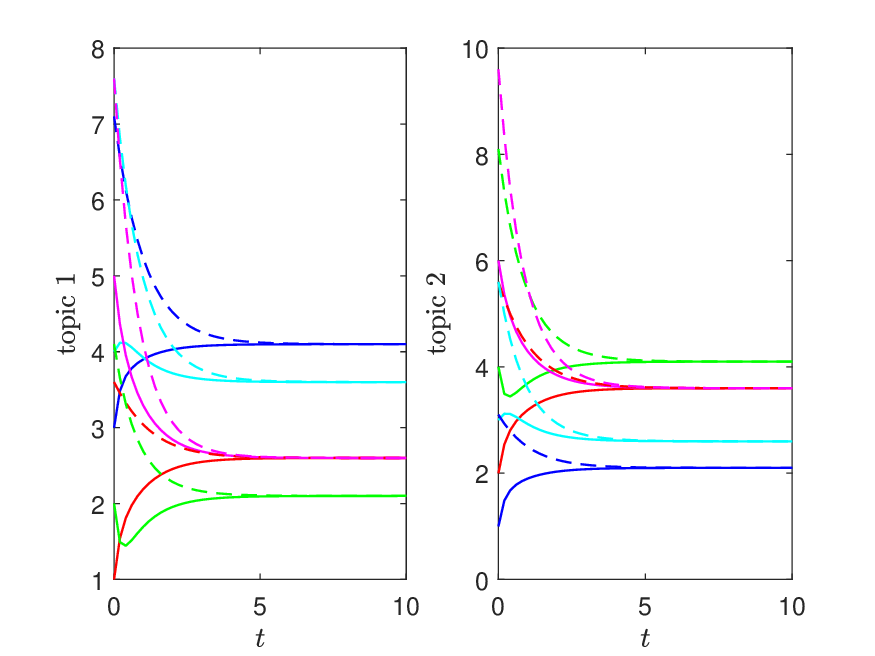}
         \caption{$W_{ii}=0_{2\times2},W_{ij}=[1~1;~1~1],r_{\varpi_i}=2$}
     \end{subfigure}
    \begin{subfigure}[b]{0.49\textwidth}
         \centering
         \includegraphics[width=\textwidth]{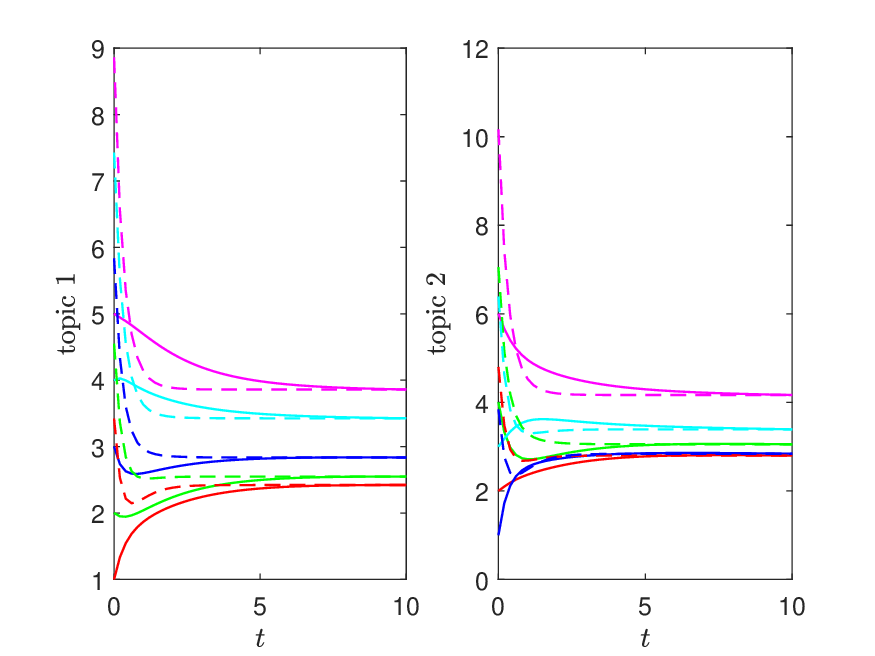}
         \caption{$W_{ii}=I_{2\times2},W_{ij}=I_{2\times2},r_{\varpi_i}=2$}
     \end{subfigure}
     \begin{subfigure}[b]{0.49\textwidth}
         \centering
         \includegraphics[width=\textwidth]{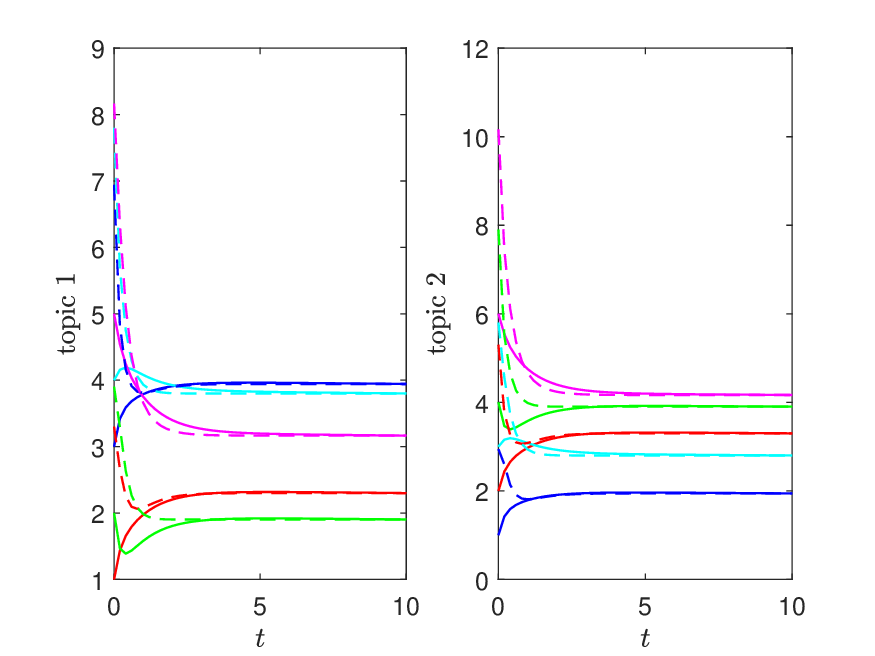}
         \caption{$W_{ii}=I_{2\times2},W_{ij}=[1~1;~1~1],r_{\varpi_i}=2$}
     \end{subfigure}
\caption{Opinion trajectories associated with the game strategies and their distributed counterparts (shown with dashed lines). (a) non-stubborn agents with unrelated topics; (b) non-stubborn agents with fully coupled topics; (c) stubborn agents with unrelated topics; (d) stubborn agents with fully coupled topics.}
\label{fig:game}
\end{figure}

\begin{figure}[ht]
\centering
     \begin{subfigure}[b]{0.49\textwidth}
         \centering
         \includegraphics[width=\textwidth]{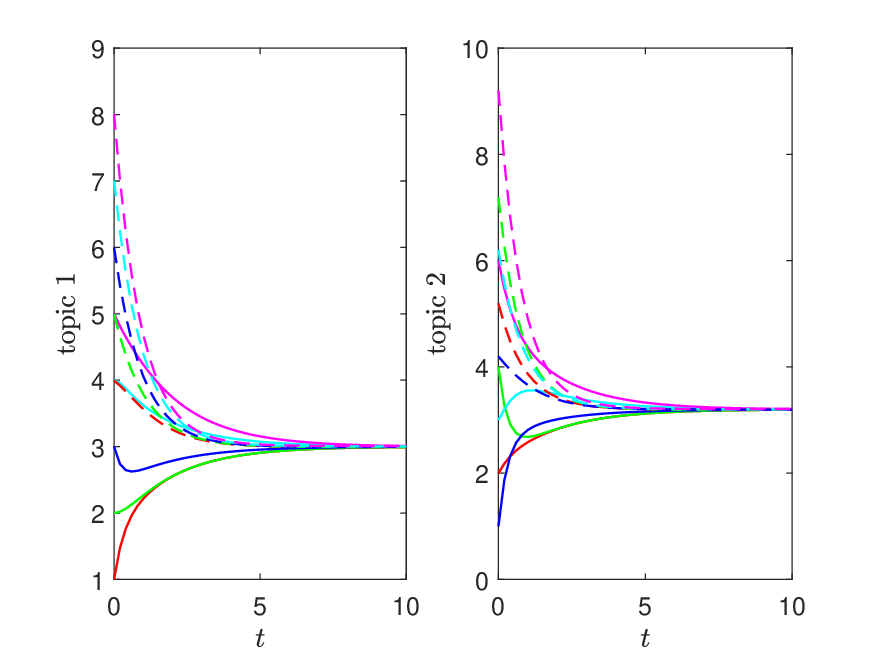}
         \caption{$W_{ii}=0_{2\times2},W_{ij}=I_{2\times2},r_{\varpi_i}=0.5$}
     \end{subfigure}
     \begin{subfigure}[b]{0.49\textwidth}
         \centering
         \includegraphics[width=\textwidth]{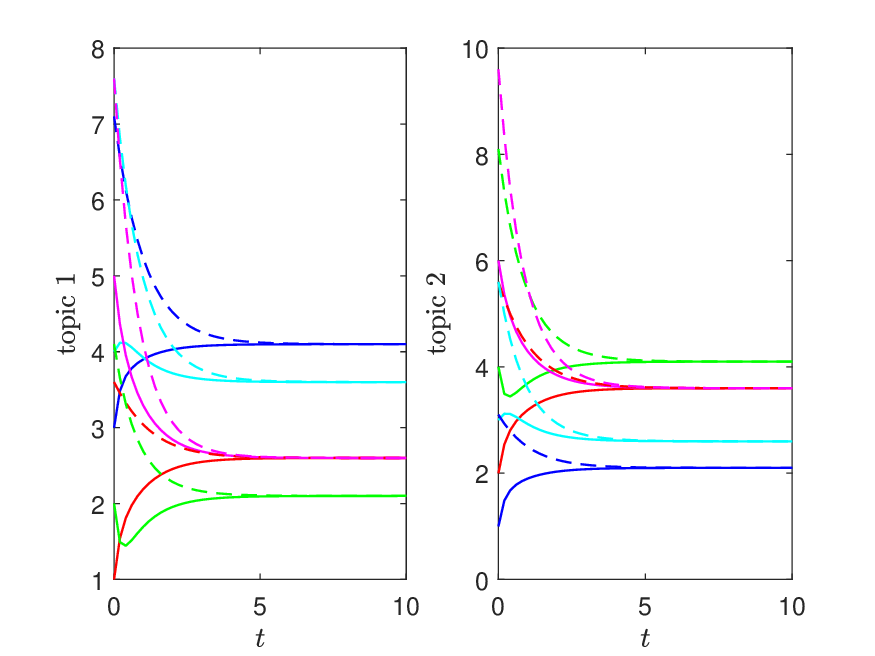}
         \caption{$W_{ii}=0_{2\times2},W_{ij}=[1~1;~1~1],r_{\varpi_i}=0.5$}
     \end{subfigure}
    \begin{subfigure}[b]{0.49\textwidth}
         \centering
         \includegraphics[width=\textwidth]{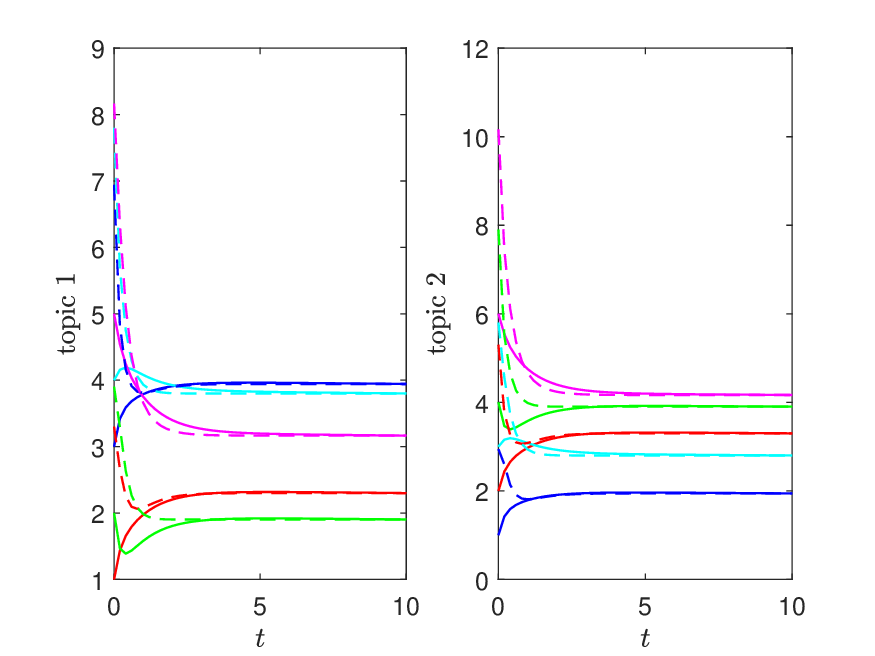}
         \caption{$W_{ii}=I_{2\times2},W_{ij}=I_{2\times2},r_{\varpi_i}=0.5$}
     \end{subfigure}
     \begin{subfigure}[b]{0.49\textwidth}
         \centering
         \includegraphics[width=\textwidth]{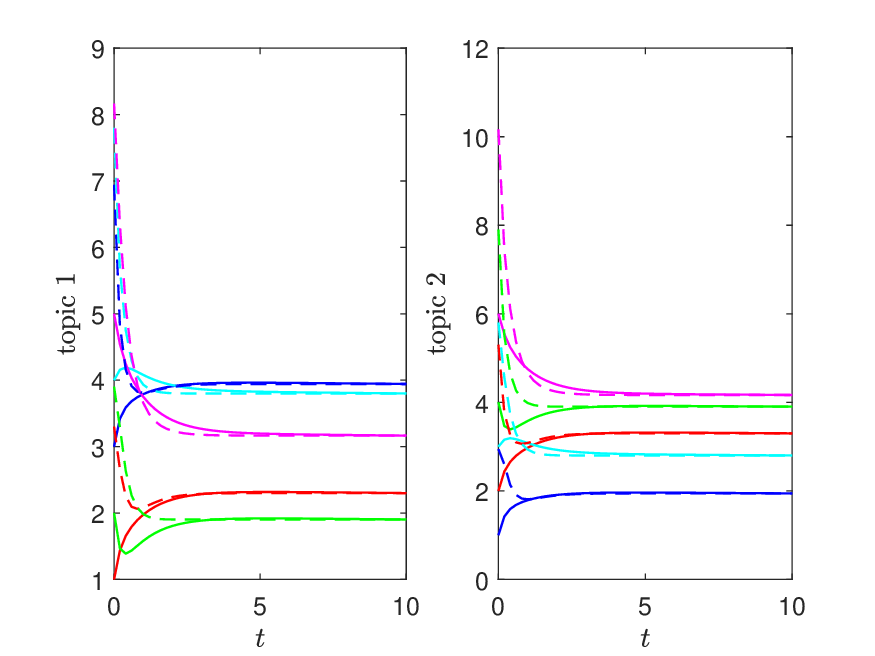}
         \caption{$W_{ii}=I_{2\times2},W_{ij}=[1~1;~1~1],r_{\varpi_i}=0.5$}
     \end{subfigure}
\caption{Opinion trajectories associated with the game strategies and their distributed counterparts with disturbances $r_{\varpi_i}=0.5$ for $i=1,\cdots,5$.}
\label{fig:game-dist}
\end{figure}

The results of applying the socially optimal actions presented in Theorem~\ref{theorem:optimal} are shown in Fig.~\ref{fig:optimal}. For non-stubborn agents, their opinion trajectories under the game strategies and social optimal actions overlap, while for stubborn agents, they are similar. Agent 5 has the least communication with other agents, and only its final opinion slightly differs when under game strategies and socially optimal actions. 

\begin{figure}[ht]
\centering
     \begin{subfigure}[b]{0.49\textwidth}
         \centering
         \includegraphics[width=\textwidth]{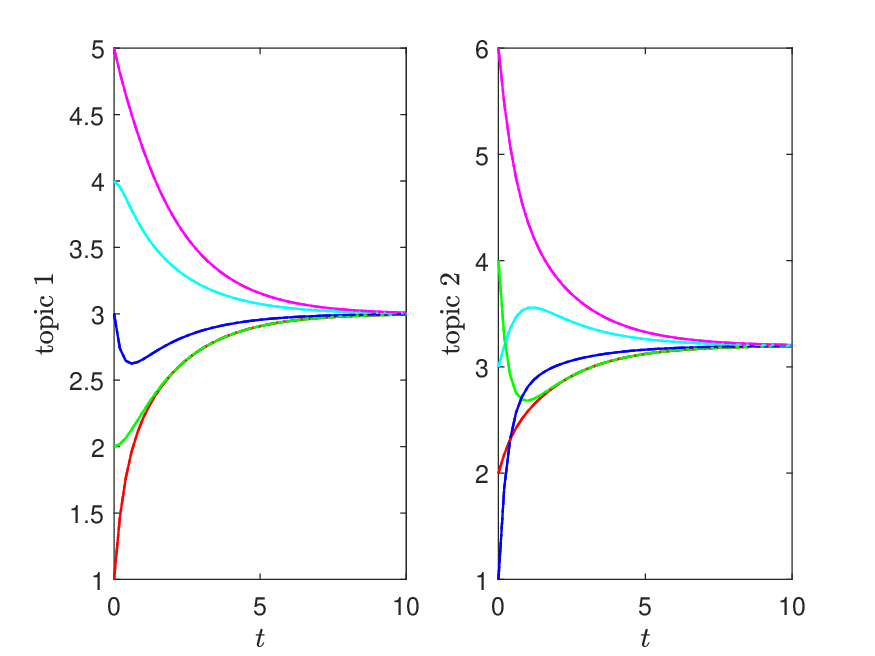}
         \caption{$W_{ii}=0_{2\times2},W_{ij}=I_{2\times2},r_{\varpi_i}=0.1$}
     \end{subfigure}
     \begin{subfigure}[b]{0.49\textwidth}
         \centering
         \includegraphics[width=\textwidth]{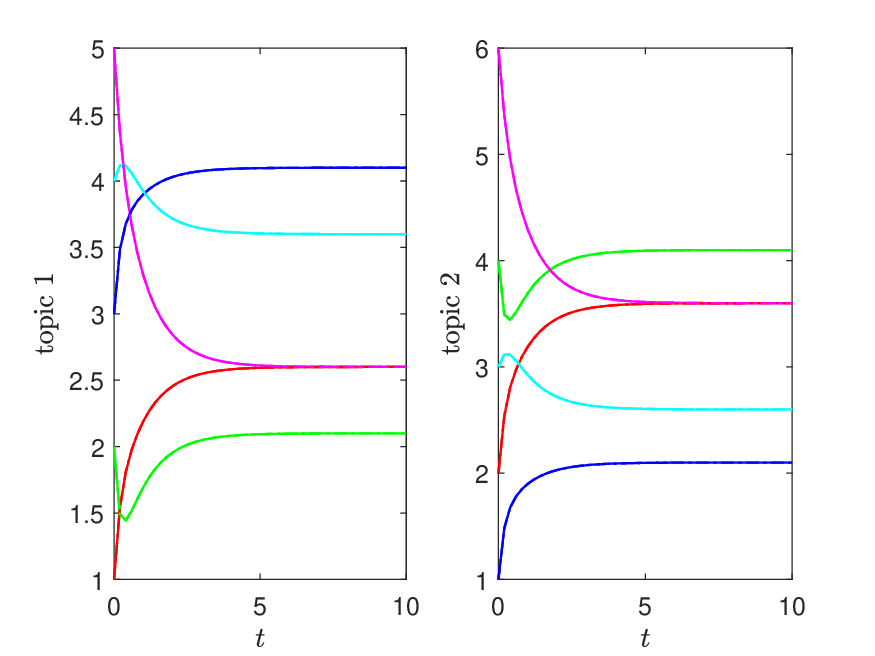}
         \caption{$W_{ii}=0_{2\times2},W_{ij}=[1~1;~1~1],r_{\varpi_i}=0.1$}
     \end{subfigure}
    \begin{subfigure}[b]{0.49\textwidth}
         \centering
         \includegraphics[width=\textwidth]{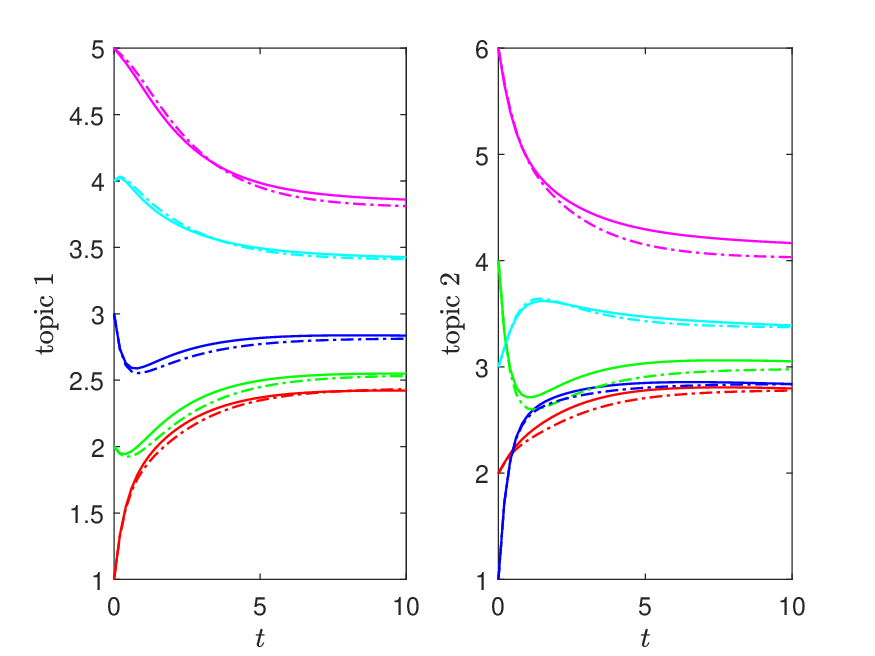}
         \caption{$W_{ii}=I_{2\times2},W_{ij}=I_{2\times2},r_{\varpi_i}=0.1$}
     \end{subfigure}
     \begin{subfigure}[b]{0.49\textwidth}
         \centering
         \includegraphics[width=\textwidth]{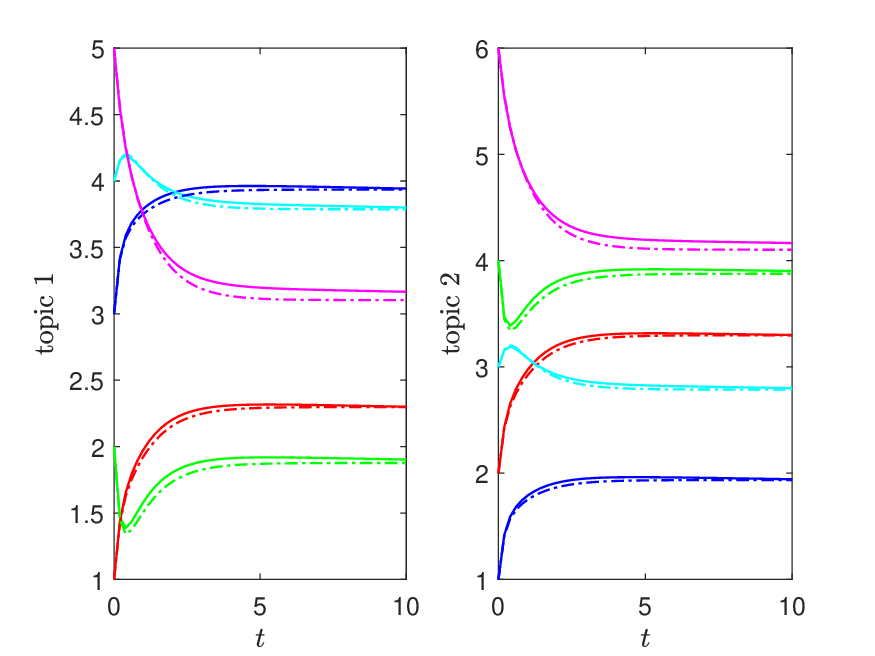}
         \caption{$W_{ii}=I_{2\times2},W_{ij}=[1~1;~1~1],r_{\varpi_i}=0.1$}
     \end{subfigure}
\caption{Opinion trajectories associated with socially optimal actions (shown with a dash-dotted line) versus opinion trajectories associated with game strategies.}
\label{fig:optimal}
\end{figure} 

To see whether the changes in the stubbornness of stubborn agents affect their opinion trajectories associated with game strategies and socially optimal actions, we show both trajectories for highly stubborn agents with the stubbornness of $W_{ii}=10I_{2\times 2}$ for $i=1,\cdots,5$ in Figure~\ref{fig:stubborn}. We observe that at the beginning and end of the horizon length, both trajectories are very close, but the rest of the time they get a little far from each other. 

\begin{figure}[ht]
\centering
     \begin{subfigure}[b]{0.49\textwidth}
         \centering
         \includegraphics[width=\textwidth]{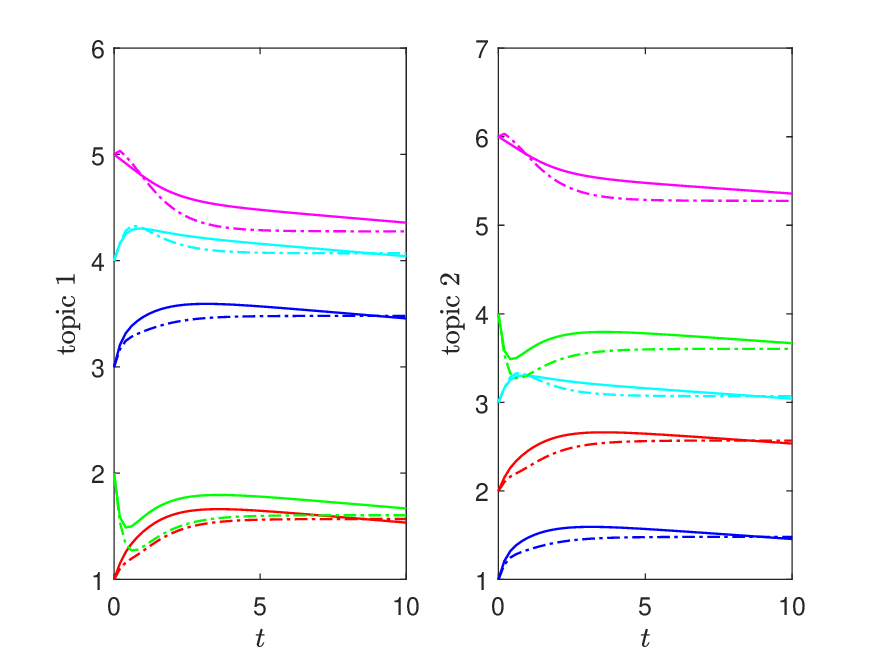}
         \caption{$W_{ii}=10I_{2\times2},W_{ij}=I_{2\times2},r_{\varpi_i}=0.1$}
         \label{fig:sim1}
     \end{subfigure}
     \begin{subfigure}[b]{0.49\textwidth}
         \centering
         \includegraphics[width=\textwidth]{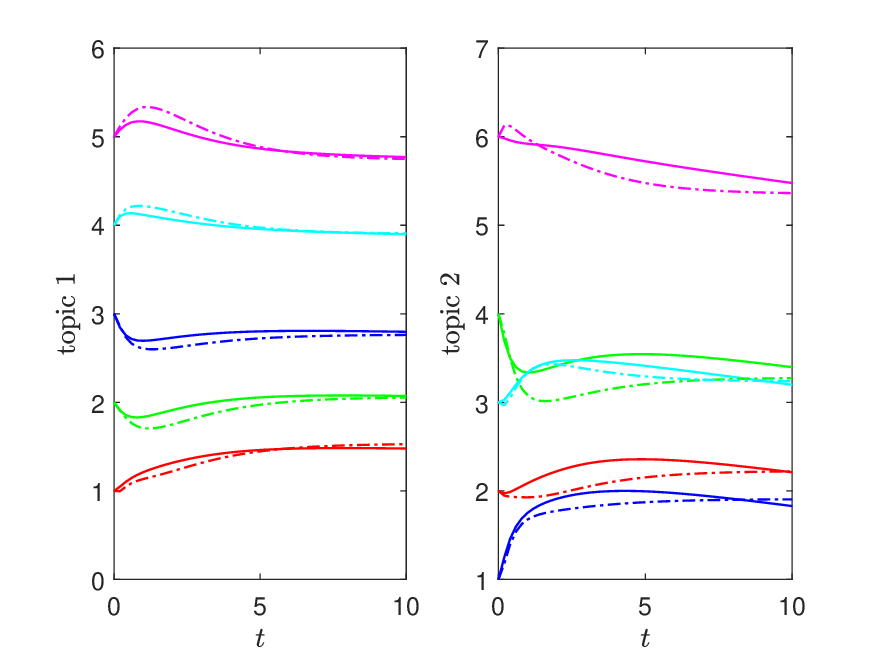}
         \caption{$W_{ii}=10I_{2\times2},W_{ij}=[1~1;~1~1],r_{\varpi_i}=0.1$}
     \end{subfigure}
\caption{Opinion trajectories associated with game strategies and socially optimal actions for highly stubborn agents.}
\label{fig:stubborn}
\end{figure} 

Furthermore, to quantify the performance of the game strategies against their distributed counterparts and the socially optimal actions, we utilize the terminal opinion formation errors that correspond to each. The terminal opinion formation error for agent $i$ under the game strategies and their distributed counterparts are calculated from (\ref{eq:terminal-error-i}) and (\ref{eq:terminal-error-i-dist}), respectively. Under socially optimal actions, the total terminal opinion formation error is given by (\ref{eq:terminal-error}). The terminal opinion formation errors under the game strategies ($E_i$), their distributed counterparts ($\hat{E}_i$), and the socially optimal actions ($E_o$) for $t_f=10$ are summarized in Tables~\ref{table:WiiZeroWijEye}-\ref{table:WiiEyeWijOnes}. From these tables, we see that the errors under the game strategies and the errors under their distributed counterparts are very close to each other, and they are less than the errors under the socially optimal actions. In addition, the errors for stubborn agents are much higher than for non-stubborn agents. Although the socially optimal action approach has nearly the same performance as the game strategy approach and its distributed counterpart approach in terms of terminal opinion formation errors, it has significant drawbacks that make it undesirable. As it is apparent from (\ref{eq:optimal}), each agent has to acquire the complete information of all other agents and their initial state vector. On the contrary, the agents can implement the counterparts of game strategies distributedly. In addition, social optimality does not allow individual agents to be strategists, who are decision-makers prioritizing their own interests.

\begin{table}[ht]
\small
\centering
\caption{Terminal opinion formation errors for non-stubborn ($W_{ii}=0_{2\times 2}$) agents on unrelated topics ($W_{ij}=I_{2\times 2}$)} 
\begin{tabular}{|p{0.5cm}|p{2.5cm}|p{2.5cm}|p{2.5cm}|}
 \hline
 $i$  & $E_i$  & $\hat{E}_i$ & $E_o$ 
 \\\hline
1&  $7.4116\times 10^{-6}$ & $7.4229\times 10^{-6}$ & $5.5011\times 10^{-5}$ \\
2&  $7.4116\times 10^{-6}$ & $7.4206\times 10^{-6}$ & $5.5011\times 10^{-5}$\\ 
3&  $5.9987\times 10^{-5}$ & $6.0077\times 10^{-5}$ & $5.5011\times 10^{-5}$ \\
4&  $6.5443\times 10^{-5}$ & $6.5548\times 10^{-5}$ & $5.5011\times 10^{-5}$\\
5&  $2.0279\times 10^{-5}$ & $2.0314\times 10^{-5}$ & $5.5011\times 10^{-5}$ \\
\hline
\end{tabular}
\label{table:WiiZeroWijEye}
\end{table}

\begin{table}[ht]
\small
\centering
\caption{Terminal opinion formation errors non-stubborn agents ($W_{ii}=0_{2\times 2}$) on fully coupled topics ($W_{ij}=[1~1;~1~1]$)} 
\begin{tabular}{|p{0.5cm}|p{2.5cm}|p{2.5cm}|p{2.5cm}|}
 \hline
 $i$  & $E_i$  & $\hat{E}_i$ & $E_o$ 
 \\\hline
1&  $4.5838\times 10^{-10}$ & $5.8660\times 10^{-10}$ & $3.4023\times 10^{-9}$ \\
2&  $4.5839\times 10^{-10}$ & $5.5744\times 10^{-10}$ & $3.4023\times 10^{-9}$\\ 
3&  $3.7100\times 10^{-9}$ & $4.7495\times 10^{-9}$ & $3.4023\times 10^{-9}$\\
4&  $4.0475\times 10^{-9}$ & $5.2724\times 10^{-9}$ & $3.4023\times 10^{-9}$\\
5&  $1.2542\times 10^{-9}$ & $1.6661\times 10^{-9}$ & $3.4023\times 10^{-9}$ \\
\hline
\end{tabular}
\label{table:WiiZeroWijOnes}
\end{table}
\begin{table}[ht]
\small
\centering
\caption{Terminal opinion formation errors for stubborn ($W_{ii}=I_{2\times 2}$) agents on unrelated topics ($W_{ij}=I_{2\times 2}$)} 
\begin{tabular}{|p{0.5cm}|p{1.5cm}|p{1.5cm}|p{1.5cm}|}
 \hline
 $i$ & $E_i$ & $\hat{E}_i$ & $E_o$ 
 \\\hline
1&  0.2602 & 0.2602 & 1.4263 \\
2&  0.1645 & 0.1645 & 1.4263 \\ 
3&  0.6474 & 0.6474 & 1.4263 \\
4&  0.7448 & 0.7448 & 1.4263 \\
5&  0.6275 & 0.6275 & 1.4263 \\
\hline
\end{tabular}
\label{table:WiiEyeWijEye}
\end{table}
\begin{table}[ht]
\small
\centering
\caption{Terminal opinion formation errors for stubborn ($W_{ii}=I_{2\times 2}$) agents on fully coupled topics ($W_{ij}=[1~1;~1~1]$)} 
\begin{tabular}{|p{0.5cm}|p{1.5cm}|p{1.5cm}|p{1.5cm}|}
 \hline
 $i$  & $E_i$  & $\hat{E}_i$ & $E_o$ 
 \\\hline
1&  0.2307 & 0.2307 & 1.1261 \\
2&  0.0251 & 0.0251 & 1.1261\\ 
3&  0.3882 & 0.3882 & 1.1261 \\
4&  0.5265 & 0.5265 & 1.1261\\
5&  0.6037 & 0.6037 & 1.1261 \\
\hline
\end{tabular}
\label{table:WiiEyeWijOnes}
\end{table}

\section{Conclusions}\label{section:conclusions}

In this study, a differential game model of opinion formation in social networks has been proposed. The evolution of the opinions of self-interested people in social groups formed through friendships in real life served as the model's inspiration. The game problem has been solved for an open-loop information Nash equilibrium solution. Moreover, a distributed implementation of the solution has been proposed where the game strategies and the associated opinion trajectory for each player solely depend on the information from the neighboring players. The model was used to examine the opinion formation of a small opinion network with five agents. A natural extension to the proposed model can be a time-varying graph topology with antagonistic agent interactions.

\section*{Acknowledgments}

This work was supported by the Czech Science Foundation (GACR) grant no. 23-07517S. This work was supported in full by the Science and Research Council of Turkey (T\"{U}B\.{I}TAK) under project EEEAG-121E162.

 \appendix
\section{Commutative linear time-varying systems} \label{app:time-varying}

Consider the linear time-varying differential state space systems described by 
\begin{equation}\label{eq:STM}
    \dot{x}(t)=A(t)x(t), \quad x(t_0)=x_{0},
\end{equation}
where $x(t)$ is the state vector, $A(t)$ is a time-varying matrix, and $x_0$ is an initial condition. The solution of (\ref{eq:STM}) is given by 
\begin{equation*}
    x(t)=\Phi(t,t_0)x_{0},
\end{equation*}
where $\Phi(t,t_0)$ denotes the state-transition matrix of $A(t)$ and satisfies
\begin{equation*}
    \dot{\Phi}(t,t_0)=A(t)\Phi(t,t_0), \quad \Phi(t_0,t_0)=I.
\end{equation*}

 In general, the state-transition matrix for time-varying systems is not available in explicit form. 
 For a special class of linear time-varying systems above, $A(t)$ commute with $A(\tau)$ for all $t$ and $\tau$. For these systems, the unique state-transition matrix is explicitly available in the exponential form given below
\begin{equation*}
    \Phi(t,t_0)=\exp{(\int_{t_0}^t A(\tau)d\tau)}.
\end{equation*}
Under the similarity transformation, the expression for the state-transition matrix further simplifies as
\begin{equation*}
    \Phi(t,t_0)=P(t)\exp{(\int_{t_0}^t \Lambda(\tau)d\tau)}P^{-1}(t),
\end{equation*}
where $\Lambda(t)$ is a diagonal matrix and $P(t)$ is a nonsingular modal matrix $P(t)$~\cite{commutative}. 

A more effective way of calculation of $\Phi(t,t_0)$ is using the concept of an extended eigenvalue and extended eigenvector introduced in~\cite{1102448}. Let the scalar $\mu(t)$ be an extended eigenvalue and $e(t)$ be its corresponding extended eigenvector and the pair $\{\mu(t),e(t)\}$ is called the extended eigenpair of $A(t)$. Then, the spectral decomposition of the state-transition matrix is explicitly given by
\begin{equation}\label{eq:spec-decom}
    \Phi(t,t_0)=\sum_{i}\exp{\Big(\int_{t_0}^t \mu_i(\tau)d\tau\Big)}e_i(t)r_i^\top(t_0),
\end{equation}
where $r_i^\top(t)$ is the reciprocal basis of $e(t)$, i.e.,
\begin{equation*}
    r_i^\top(t)e_j(t)=\begin{cases}
      0 & \text{$i\neq j$},\\
      1 & \text{$i= j$}.
    \end{cases}       
\end{equation*}
For a commutative $A(t)$ its extended eigenvectors are constant~\cite{Wang2017} and its extended eigenpair are determined by the auxiliary equations 
\begin{align*}
    \mathrm{det}\big(\mu(t)I-A(t)\big)=0, \quad \big(\mu(t)I-A(t)\big)e=0.
\end{align*}

When $A$ is time-invariant and all of its eigenvalues have negative real parts, (\ref{eq:STM}) becomes asymptotically stable. For a time-varying $A(t)$, (\ref{eq:STM}) is said to be asymptotically stable if and only if $x(t)$ starting from any finite initial state $x_0$ is bounded and $x(t)\rightarrow 0$ as $t\rightarrow \infty$. The asymptotic stability of (\ref{eq:STM}) can be characterized in terms of its extended eigenpairs as quoted below from Theorem 3 in~\cite{Wang2017}. Define 
\begin{equation*}
    \Omega_i=\Vert \exp{\Big(\int_{t_0}^t \mu_i(\tau)d\tau\Big)}e_i(t)\Vert, 
\end{equation*}
for each extended eigenpair $\{\mu(t),e(t)\}$ of $A(t)$. If for every $\{\mu_i(t),e_i(t)\}$, $\Omega_i\leq \Xi$ for all $t\geq t_0$ and $\Xi<\infty$, then $x(t)$ is bounded, and  $\Omega_i\rightarrow 0$ as $t\rightarrow \infty$, the time varying system (\ref{eq:STM}) is asymptotically stable.

\end{document}